\newtheorem{thm}{Theorem}[section] 
 \newtheorem{lem}[thm]{Lemma} 
\newtheorem{prop}[thm]{Proposition} \theoremstyle{definition} 
\newtheorem{defn}[thm]{Definition} \theoremstyle{remark} 
\newtheorem{rem}[thm]{Remark} 
\numberwithin{equation}{section} 
\begin{document}
%-------------------------------------------------------------------------% editorial commands: to be inserted by the editorial office%%\firstpage{1}%\volume{228}%\Copyrightyear{2004}%\DOI{003-0001}%%%\seriesextra{Just an add-on}%\seriesextraline{This is the Concrete Title of this Book\br H.E. R and S.T.C. W, Eds.}%% for journals:%%\firstpage{1}%\issuenumber{1}%\Volumeandyear{1 (2004)}%\Copyrightyear{2004}%\DOI{003-xxxx-y}%\Signet%\commby{inhouse}%\submitted{March 14, 2003}%\received{March 16, 2000}%\revised{June 1, 2000}%\accepted{July 22, 2000}%%%%---------------------------------------------------------------------------

%Insert here the title, affiliations and abstract:
\title[Singularities of the scattering kernel] {Singularities of the scattering kernel related to trapping rays}\author{Vesselin Petkov}\address{Institut de Math\'ematiques de Bordeaux, 351 Cours de la Lib\'eration, 33405 Talence, France}\email{petkov@math.u-bordeaux1.fr}
%\thanks{This work was completed with the support of our%\TeX-pert.}%----------Author 2
\author{Luchezar Stoyanov}\address{School of Mathematics and Statistics, University of Western Australia, 35 Stirling Hwy, Crawley 6009, Western Australia}\email{stoyanov@maths.uwa.edu.au}
%----------classification, keywords, date
\subjclass{Primary 35P25, Secondary 47A40, 35L05}
\keywords{Scattering amplitude, Reflecting rays, Trapping trajectories, Sojourn time}
\date{}

%----------additions
\dedicatory{Dedicated to Ferruccio Colombini on the occasion of his 60th birtday}
%%% ----------------------------------------------------------------------%
%\begin{verbatim}\documentclass{birkmult}\end{verbatim}
\newcommand{\h}{{\mathcal H}}
\newcommand{\R}{{\mathbb R}}
\newcommand{\N}{{\mathbb N}}
\newcommand{\C}{{\mathbb C}}
\newcommand{\F}{{\mathcal F}}
\newcommand{\Oo}{{\mathcal O}}
\newcommand{\K}{{\mathcal K}}
\newcommand{\D}{{\mathcal D}}
\newcommand{\G}{{\mathcal G}}
\newcommand{\Hh}{{\mathcal H}}
\newcommand{\Z}{{\mathbb Z}}
\newcommand{\Q}{{\mathbb Q}}
\newcommand{\U}{{\mathcal U}}
\newcommand{\A}{{\mathbb A}}
\newcommand{\Ss}{{\mathbb S}}
\renewcommand{\Re}{\mathop{\rm Re}\nolimits}
\renewcommand{\Im}{\mathop{\rm Im}\nolimits}
\begin{abstract}

An obstacle $K \subset \R^n,\: n \geq 3,$ $n$ odd, is called trapping if there exists at least one generalized bicharacteristic $\gamma(t)$ of the wave equation staying in a neighborhood of $K$ for all $t \geq 0.$
We examine the singularities of the scattering kernel $s(t, \theta, \omega)$ defined as the Fourier transform of the scattering amplitude $a(\lambda, \theta, \omega)$ related to the Dirichlet problem for the wave equation in $\Omega = \R^n \setminus K.$ We prove that if $K$ is trapping and $\gamma(t)$ is non-degenerate, then there exist reflecting $(\omega_m, \theta_m)$-rays $\delta_m,\: m \in \N,$ with sojourn times $T_m \to +\infty$ as $m \to \infty$, so that $-T_m \in {\rm sing}\:{\rm supp}\: s(t, \theta_m, \omega_m),\: \forall m \in \N$. We apply this property to study the behavior  of the scattering amplitude in $\C$.\\
\end{abstract}

%%% ----------------------------------------------------------------------
\maketitle
%\tableofcontents{ }

\numberwithin{equation}{section}

\def\CC{{\mathcal C}}
\def\lap{\bigtriangleup}
\def\ra{\rangle}
\def\la{\langle}
\def\TO{\dot{T}^*(\Omega)}
\def\ff{{\mathcal F}}
\def\rr{{\mathcal R}}
\def\ot{(\omega, \theta)}
\def\e{\varepsilon}
\def\phi {\varphi}
\def \la {{\lambda}}
\def \a {{\alpha}}
\def\sn{{\mathbb S}^{n-1}}
\def\ssn{{\sn}\times {\sn}}
\def\pp{P_{+}}
\def\ppm{P_{-}}
\def\ppr{P_{+}^{\rho}}
\def\ppmr{P_{-}^{\rho}}
\def\hl{{\mathcal H}_{{\rm loc}}}
\def\rp{{\rm Res}\:P(t)}
\def\ggi{{\mathcal G}^{(i)}}
\def\td{\tilde{d}}
\def\cS{\check{S}}
\def\ii{{\mathcal I}}
\def\mm{{\mathcal M}}
\def\cp{\check{p}}
\def\pr{{\rm pr}}
\def\uu{{\mathcal U}}
\def\qq{{\mathcal Q}}
\def\ovo{\overset{\circ}\Omega}
\def\ii{{\bf i}}
\def\la{\langle}
\def\ra {\rangle}

\section{Introduction}

Let $K \subset \{x \in \R^n,\:|x| \leq \rho\},\: n \geq 3,\:n$ odd, be a bounded domain with $C^{\infty}$ boundary $\partial K$ 
and connected complement $\Omega = \overline{\R^n \setminus K}.$ Such $K$ is called an {\it obstacle}
in $\R^n$. In this paper we consider the Dirichlet problem for the wave equation however in a similar way one can deal with other boundar value problems. Given two directions $(\theta, \omega) \in \ssn$, consider the {\em outgoing solution} $ v_s(x, \lambda)$ of the problem
$$
\begin{cases} (\Delta + \lambda^2) v_s = 0 \:\: {\rm in} \:\: \ovo,\\
v_s + e^{- \ii\lambda \langle x,  \omega \rangle} = 0\:\:{\rm on} \: \partial K, \end{cases}
$$
satisfying the so called $(\ii \lambda)$ - outgoing Sommerfeld radiation condition:
\[ v_s(r\theta, \lambda) = \frac{e^{-\ii \lambda r}}{r^{(n-1)/2}} \Bigl(a(\lambda, \theta, \omega) + 
{\mathcal O} \Bigl(\frac{1}{r} \Bigr)\Bigr), \:\: x = r\theta,\: {\rm as}\: |x| = r  \longrightarrow \infty \,.\] 
 The leading term $a(\lambda, \theta, \omega)$ is called {\it scattering amplitude} and we have the following representation
\begin{equation}
 a(\lambda, \theta, \omega) =  \frac{(\ii\lambda)^{(n-3)/2}}{2(2\pi)^{(n-1)/2}} \int_{\partial K} 
\Bigl( \ii \lambda \langle \nu(x), \theta \rangle e^{\ii\lambda \langle x , \theta - \omega \rangle } - 
e^{\ii\lambda \langle x, \theta \rangle } \frac{\partial v_s}{\partial \nu} (x, \lambda) \Bigr) dS_x \,, 
\end{equation}
where  $\langle \bullet , \bullet \rangle $ denotes the inner product in $\R^n$ and $\nu(x)$ is the unit normal to $x \in \partial K$ pointing into $\Omega$ (see \cite{Ma2}, \cite{P}).\\

Throughout this note we assume that $\theta \neq \omega.$ The {\it scattering kernel} $s(t, \theta, \omega)$ is defined as the Fourier transform of the scattering amplitude
\[ s(t, \theta, \omega) = {\mathcal F}_{\lambda \to t} \Bigl(\Bigl(\frac{\ii\lambda}{2 \pi}\Bigr)^{(n-1)/2}  \overline{ a(\lambda, \theta, \omega)}\Bigr) \,, \]
where $\Bigl({\mathcal F}_{\lambda \to t} \varphi\Bigr)(t) = (2\pi)^{-1}\int e^{\ii t\lambda} \varphi(\lambda) d\lambda$ 
for functions $\varphi \in {\mathcal S}(\R).$
Let $V(t, x;\omega)$ be the solution of the problem
$$ \begin{cases}(\partial^2_t - \Delta) V =  0\:\: {\rm in} \:\: \R \times \ovo,\\
V = 0\:\:{\rm on} \: \R \times \partial K,\\
V\vert_{t < - \rho} = \delta(t - \la x, \omega \ra). \end{cases}$$
Then we have
$$ s(\sigma, \theta, \omega) = (-1)^{(n+1)/2} 2^{-n}\pi^{1-n} \int_{\partial K}  \partial^{n-2}_t \partial_{\nu} 
V (\la x, \theta \ra -\sigma, x; \omega) dS_x\; ,$$
where the integral is interpreted in the sense of distributions.\\

 The singularities of $s(t, \theta, \omega)$ with respect to $t$ can be observed since at these times we have some non negligible picks of the scattering amplitude. For example, if $K$ is strictly convex, for fixed $\theta \neq \omega$ we have only one singularity at $t = -T_{\gamma}$ related to the sojourn time of the unique $(\omega, \theta)$-reflecting ray $\gamma$ (see \cite{Ma1}). For general non-convex obstacles the geometric situation is much more complicated since we have different type of rays incoming with direction $\omega$ and outgoing in direction $\theta$ for which an asymptotic solution related to the rays is impossible to construct. 
In many problems, such as those concerning local decay of energy, behavior of the cut-off resolvent of the Laplacian, the existence of resonances etc. the difference between non-trapping and trapping obstacles is quite significant. In recent years many authors studied mainly trapping obstacles with some very special geometry and the case of several strictly convex disjoint obstacles has been investigated both from mathematical and numerical analysis point of view.\\

In this work our purpose is the study the obstacles having at least one $(\omega, \theta)$-trapping ray $\gamma$ which in general could be non-reflecting (see Section 2 for the definition of an $(\omega, \theta)$-ray). 
No assumptions are made on the geometry of the obstacle outside some small neighborhood of $\gamma$ and no information is required about other possible $(\omega, \theta)$-rays. Our aim is to examine if the existence of $\gamma$ may create an {\bf infinite number} of delta type singularities $T_m \to \infty$ of $s(-t, \theta_m, \omega_m)$, in contrast to the non-trapping case where $s(t, \theta, \omega)$ is $C^{\infty}$ smooth for $|t| \geq T_0 > 0$ and all $(\theta, \omega) \in \ssn.$ On the other hand, it is important to stress that the scattering amplitude and the scattering kernel are global objects and their behavior depends on all $(\omega, \theta)$-rays so any type of cancellation of singularities may occur. The existence of a trapping ray influences the singularities of $s(t, \theta, \omega)$ if we assume that $\gamma$ is non-degenerate which is a local condition (see Section 3). Thus our result says that from the scattering data related to the singularities of $s(t, \theta, \omega)$ we can ``hear'' whether $K$ is trapping or not.

The proof of our main result is based on several previous works \cite{P}, \cite{PS1}, \cite{PS2}, \cite{PS3}, \cite{St1}, and our purpose here is to show how the results of these works imply the existence of an infinite number of singularities. The reader may consult \cite{PS4} for a survey on the results mentioned above.

\section{Scattering kernel}

We start with the definition of the so called reflecting $\ot$-rays.
Given two directions $\ot \in \ssn$,  consider a curve $\gamma \in \Omega$ having the form
$$ \gamma = \cup_{i = 0}^{m} l_i, \:\: m \geq 1,$$
where $l_i = [x_i, x_{i+1}]$ are finite segments for $i = 1,...,m-1,\: x_i \in \partial K$, and $l_0$  (resp. $l_m$) 
is the infinite segment starting at $x_1$ (resp. at $x_m$) and having direction $-\omega$ (resp. $\theta$). 
The curve $\gamma$ is called a {\it reflecting} $\ot$-ray in $\Omega$ if for $i = 0,1,...,m-1$ the 
segments
$l_i$ and $l_{i+1}$ satisfy the law of reflection at $x_{i+1}$ with respect to $\partial K$. The points 
$x_1,...,x_m$ are called {\it reflection points} of $\gamma$ and this ray is called {\it ordinary reflecting} 
 if $\gamma$ has no segments tangent to $\partial K.$\\

Next, we define two notions related to $\ot$-rays.  Fix an arbitrary open ball $U_0$ with radius 
$a > 0$ containing $K$ and  for $\xi \in \sn$ introduce the hyperplane $Z_{\xi}$ orthogonal to $\xi$, tangent to $U_0$ 
 and such that $\xi$ is pointing into the interior of the open half space $H_{\xi}$ with boundary $Z_{\xi}$ containing $U_0$. 
Let
$\pi_{\xi} : \R^n \longrightarrow Z_{\xi}$ be the orthogonal projection. For a reflecting $(\omega, \theta)$-ray
$\gamma$ in $\Omega$ with successive reflecting points $x_1,..., x_m$ the {\it sojourn time} $T_{\gamma}$ of $\gamma$
is defined by
\[ T_{\gamma} = \|\pi_{\omega}(x_1) - x_1 \| + \sum_{i=1}^{m-1} \|x_i - x_{i+1}\| + \|x_m - \pi_{-\theta}(x_m) \|
 - 2a \,. \]
Obviously, $T_{\gamma} + 2a$ coincides with the length of the part of $\gamma$ that lies in $H_{\omega} \cap  H_{-\theta}$.
The sojourn time $T_{\gamma}$ does not depend on the choice of the ball $U_0$ and
\[ T_{\gamma} = \la x_1, \omega \ra + \sum_{i=1}^{m-1} \|x_i - x_{i+1}\| - \la x_m, \theta \ra \,. \]

\begin{figure}
\begin{center}
%\input petkov.ltex
% dessin effectue par JpDraw (un programme de JP. MERLET)
% INRIA Sophia-Antipolis,France le: Fri Oct  6 14:20:06 2000
% Directory de creation:/home/rremi/Petkov
% Createur:rremi
\ifx\cs\tempdima \newdimen \tempdima\fi
\ifx\cs\tempdimb \newdimen \tempdimb\fi
\ifx\cs\jpdrawx \newcount \jpdrawx\fi
\ifx\cs\jpspace \def\jpspace{ } \fi
\jpdrawx=100 \tempdima=   1.00cm \tempdimb= 1.0cm
\ifdim \unitlength = 1pt
\unitlength=  1.00cm
\fi
\ifdim \unitlength < 0cm
\multiply \unitlength by -10 \divide \unitlength by 125 \fi
\ifdim \unitlength > \tempdima \tempdima=\unitlength \fi
\ifdim \unitlength < \tempdima \tempdima=\unitlength \fi
\multiply \jpdrawx by \tempdima \divide \jpdrawx by \tempdimb
\setlength{\unitlength}{\tempdima}
\begin{picture}(  12.5,  10.0)( 0.986, 2.958)
\thinlines
\large
\put( 0.986, 2.958){% [arxiv_v2: inline-PS \special stripped, 5435 chars]}
\scriptsize
\put(5.13,5.81){$\partial K$}
\put(6.37,3.38){$U_0$}
\put(11.69,7.79){$H_{-\theta}$}
\put(11.74,9.92){$Z_{-\theta}$}
\put(7.31,10.64){$\theta(u)$}
\put(8.90,9.90){$\theta$}
\put(5.44,11.79){$u_{\gamma}$}
\put(4.83,11.70){$u$}
\put(2.58,10.44){$\omega$}
\put(1.32,11.12){$Z_{\omega}$}
\put(5.07,7.63){$x_1$}
\put(7.82,7.34){$x_m$}
\put(8.09,7.93){$x_m(u)$}
\put(4.41,8.18){$x_1(u)$}
\end{picture}
\setlength{\unitlength}{1cm}

\end{center}
\caption{}
\end{figure}

\medskip

Given an ordinary reflecting $(\omega, \theta)$-ray $\gamma$ set $u_{\gamma} = \pi_{\omega}(x_1).$ Then there exists a 
small neighborhood $W_{\gamma}$ of $u_{\gamma}$ in $Z_{\omega}$ such that for every $u \in W_{\gamma}$ there is an
unique direction $\theta(u) \in \sn$ and points $x_1(u),...,x_m(u)$ which are the successive reflection points of
a reflecting $(u, \theta(u))$-ray in $\Omega$ with $\pi_{\omega}(x_1(u)) = u$ (see Figure 1). We obtain a smooth map
\[ J_{\gamma} : W_{\gamma} \ni u \longrightarrow \theta(u) \in \sn \, \]
and $dJ_{\gamma}(u_{\gamma})$ is called a {\it differential cross section} related to $\gamma$. We say that 
$\gamma$ is  {\it non-degenerate} if
\[ \det d J_{\gamma} (u_{\gamma}) \neq 0 \,. \]
The notion of sojourn time as well as that of differential cross section are well known in the  physical literature and the definitions given above are due to Guillemin \cite{G}.\\

For non-convex obstacles there exist $\ot$-rays with some tangent and/or  gliding segments. 
To give a precise definition one has to involve the  generalized bicharacteristics of the 
operator $\square = \partial_t^2 - \Delta_x$ defined as the trajectories of the generalized Hamilton flow ${\mathcal F}_t$
in $\Omega$ generated by the symbol $\sum_{i=1}^n \xi_i^2 - \tau^2$ of $\square$ (see \cite{MS} for a precise
definition). In general,  ${\mathcal F}_t$ is not smooth and in some cases there may exist two different integral
curves issued from the same point in the phase space (see \cite{T} for an example). To avoid this situation in the following 
we assume that the following generic condition is satisfied.\\

$({\mathcal G})$ \:\:\:\:\:\:\:\:\: If for $(x, \xi) \in T^*(\partial K)$ the normal curvature of $\partial K$ 
vanishes of infinite order in direction $\xi$, then $\partial K$ is convex at $x$ in direction $\xi.$\\

Given $\sigma = (x,\xi) \in  T^*(\Omega) \setminus \{0\} = \TO$, there exists a unique generalized bicharacteristic 
$(x(t),\xi(t)) \in  \TO$ such that $x(0) = x,\:\: \xi(0) = \xi$ and we define $\ff_t(x,\xi) = (x(t), \xi(t))$ for all $t\in \R$(see \cite{MS}). We obtain a flow $\ff_t : \TO \longrightarrow \TO$  which is called
the {\it generalized geodesic flow} on $\TO$. It is clear, that this flow leaves the {\it cosphere bundle} $S^*(\Omega)$ invariant. The flow $\ff_t$ is  discontinuous at points of transversal reflection at $\dot{T}^*_{\partial K}(\Omega)$ and to make it continuous, consider the {\it quotient space} $ \TO/\sim$ of $\TO$ with 
respect to the following equivalence relation: $\rho\sim \sigma$  if and only if 
$\rho = \sigma$ or $\rho, \sigma \in T^*_{\partial K}(\Omega)$ and either $\lim_{t \nearrow 0} \ff_t(\rho) = \sigma$
or $\lim_{t \searrow 0} \ff_t(\rho) = \sigma$.   Let $\Sigma_b$ be the image of $S^*(\Omega)$
in $\TO/\sim$. The set $\Sigma_b$ is called the {\it compressed characteristic set}.  Melrose and Sj\"{o}strand (\cite{MS})
proved that the natural projection of $\ff_t$ on $\TO/\sim$ is continuous.

\medskip

Now a curve $\gamma = \{ x(t) \in \Omega: \: t \in \R \}$ is called an {\it $\ot$-ray} if there 
exist real numbers $t_1 < t_2$ so that
\[ \hat{\gamma}(t) = (x(t), \xi(t)) \in S^*(\Omega) \, \]
is a {\it generalized bicharacteristic} of $\square$ and
$$\xi(t) = \omega \:\:{\rm for} \:\: t \leq t_1, \:\:\xi(t) = \theta \:\:{\rm for} \:\: t \geq t_2 ,$$
provided that the time $t$ increases when we move along $\hat{\gamma}.$ Denote by 
${\mathcal L}_{\ot} (\Omega)$  the {\it set of all $\ot$-rays}
in $\Omega.$ The {\it sojourn time} $T_{\delta}$ of 
$\delta \in {\mathcal L}_{\ot} (\Omega)$ is defined as the length of the part of $\delta$ lying in
$H_{\omega} \cap H_{-\theta}.$\\

It was proved in \cite{P1}, \cite{CPS} (cf. also  Chapter 8 in \cite{PS1} and \cite{M2}) that
for $\omega \neq \theta$ 
we have
\begin{equation} \label{eq:2.2}
\mbox{sing supp}_t\:\: s(t,\theta,\omega) \subset \{ - T_{\gamma} :
\gamma\in {\mathcal L}_{\ot}(\Omega) \}.
\end{equation}
This relation was established for convex obstacles by Majda \cite{Ma2} and for some Riemann surfaces by Guillemin \cite{G}. The proof in \cite{P1}, \cite{CPS} deals with general obstacles and is based on the results in \cite{MS} concerning 
propagation of singularities.

In analogy with the well-known Poisson relation for the Laplacian on Riemannian manifolds, ($\ref{eq:2.2}$) is called {\it the Poisson relation for the scattering kernel}, while the set of all $T_{\gamma}$, where $\gamma \in {\mathcal L}_{\ot}(\Omega)$, 
$\ot \in \ssn$, is called the {\it scattering length spectrum} of $K$.\\

To examine the behavior of $s(t, \theta, \omega)$ near singularities, assume that $\gamma $ is a fixed
{\it non-degenerate ordinary reflecting} $\ot$-ray such that
\begin{equation} \label{eq:2.3}
 T_{\gamma} \neq T_{\delta} \:\:{\rm for}\:\:{\rm every}\:\: \delta \in {\mathcal L}_{\ot} 
(\Omega) \setminus \{\gamma \}.
\end{equation}
By using the continuity of the generalized Hamiltonian flow, it is easy to show that
\begin{equation} \label{eq:2.4}
(-T_{\gamma} - \epsilon, -T_{\gamma} + \epsilon) \cap {\rm sing}\:\:{\rm supp}_t\:\: s(t,\theta, \omega) = 
\{-T_{\gamma}\}
\end{equation}
for $\epsilon > 0$ sufficiently small. For strictly convex obstacles and $\omega \neq \theta$ every $\ot$-ray is non-degenerate and $(\ref{eq:2.4})$ is obviously satisfied. For general non-convex obstacles one needs to establish some global properties of $\ot$-rays and choose $\ot$ so that $(\ref{eq:2.4})$ holds. The singularity of $s(t,\theta, \omega)$ at $t = -T_{\gamma}$ can be investigated by using a global  construction of an asymptotic solution as a Fourier integral operator (see \cite{GM}, \cite{P1} and Chapter 9 in \cite{PS1}), and we have the following

\medskip

\begin{thm}
{\rm (\cite{P1})} Let $\gamma$ be a non-degenerate ordinary reflecting $\ot$-ray and let $\omega \neq \theta$. Then under the assumption {\rm (\ref{eq:2.4})} we have
\begin{equation}
-T_{\gamma} \in {\rm sing}\:\:{\rm supp}_t\:\: s(t, \theta, \omega)
\end{equation} 
and for $t$ close to $-T_{\gamma}$ the scattering kernel has the form 
\begin{equation}
s(t, \theta, \omega) = \Bigl(\frac{1}{2 \pi \ii}\Bigr)^{(n-1)/2}(-1)^{m_{\gamma} - 1} \exp\Bigl( \ii\frac{\pi}{2} 
\beta_{\gamma}\Bigr) 
\end{equation} 
$$ \times \Bigl | \frac{\det d J_{\gamma}(u_{\gamma}) \langle \nu(q_1), \omega \rangle}{\langle\nu(q_m), \theta \rangle}\Bigr |^{-1/2} 
\delta^{(n-1)/2}(t + T_{\gamma}) +  {\rm lower} \:\: {\rm order}\:\:{\rm singularities}.$$ 
Here $m_{\gamma}$ is the number of reflections of $\gamma$, $q_1$ $($resp. $q_m)$ is the first $($resp. the last$)$ 
reflection point of $\gamma$ and $\beta_\gamma \in \Z$. 
\end{thm}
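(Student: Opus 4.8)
The plan is to represent $V(t,x;\omega)$ near $\gamma$ by a geometric-optics parametrix, insert it into the boundary-integral formula for $s$, and evaluate the resulting oscillatory integral by stationary phase; the non-degeneracy of $\gamma$ is exactly what makes the relevant critical point non-degenerate and produces the factor $\det dJ_\gamma(u_\gamma)$. First I localize: by the Poisson relation $(\ref{eq:2.2})$ together with the hypothesis $(\ref{eq:2.4})$, on a small interval $(-T_\gamma-\epsilon,-T_\gamma+\epsilon)$ the only $\ot$-ray whose sojourn time can create a singularity of $s(t,\theta,\omega)$ is $\gamma$ itself; using finite speed of propagation and the Melrose--Sj\"ostrand propagation of singularities, the part of ${\rm WF}(V)|_{\R\times\partial K}$ responsible for that singularity is carried by the incident plane wave $\delta(t-\la x,\omega\ra)$ reflected successively at the reflection points $q_1=x_1,\dots,q_m=x_m$ of $\gamma$, everything else being $C^\infty$ near $t=-T_\gamma$.

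Next I build the parametrix. Writing $\delta(t-\la x,\omega\ra)=\frac{1}{2\pi}\int_\R e^{\ii\lambda(\la x,\omega\ra-t)}\,d\lambda$ and propagating it along $\gamma$: since $\gamma$ is an \emph{ordinary} reflecting ray, every reflection at $q_i$ is transversal and there are no tangent or gliding segments, so on a thin tube around $\gamma$ one solves the eikonal equation with the Dirichlet reflection condition at each $q_i$ and the transport equations for the amplitudes, obtaining on a neighborhood of the outgoing leg $l_m$
\[
V(t,x;\omega)\sim\frac{1}{2\pi}\int_\R e^{\ii\lambda(\Phi(x)-t)}\,A(x,\lambda)\,d\lambda\qquad(\text{modulo smoother terms}),
\]
with $A(x,\lambda)=A_0(x)+{\mathcal O}(\lambda^{-1})$. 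By construction $|\nabla\Phi|\equiv1$, $\Phi(x)=\la x,\omega\ra$ on the incoming leg, $\nabla\Phi(q_m)=\theta$, and following the optical length along $\gamma$ one has $\Phi(q_m)=\la x_1,\omega\ra+\sum_{i=1}^{m-1}\|x_i-x_{i+1}\|$. The leading amplitude $A_0$ is the product of the $m$ Dirichlet reflection coefficients (each $-1$), a Maslov factor recording the caustics met along $\gamma$, and the geometric-spreading factor (the square root of the ratio of infinitesimal cross-sections of the ray tube); non-degeneracy of $\gamma$ guarantees this factor is finite and nonzero on $l_m$, and it is expressible through $dJ_\gamma(u_\gamma)$ and the incidence cosine $\la\nu(q_1),\omega\ra$.

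Now I substitute into $s(\sigma,\theta,\omega)=(-1)^{(n+1)/2}2^{-n}\pi^{1-n}\int_{\partial K}\partial_t^{n-2}\partial_\nu V(\la x,\theta\ra-\sigma,x;\omega)\,dS_x$. The substitution $t\mapsto\la x,\theta\ra-\sigma$ turns the exponent into $\ii\lambda\,\psi_\sigma(x)$ with $\psi_\sigma(x)=\Phi(x)-\la x,\theta\ra+\sigma$; $\partial_t^{n-2}$ produces $(-\ii\lambda)^{n-2}$ and $\partial_\nu$ produces, to leading order, $\ii\lambda\,\partial_\nu\Phi(x)$, leaving a double oscillatory integral in $(\lambda,x)\in\R\times\partial K$ with phase $\lambda\psi_\sigma(x)$. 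The critical points of $\psi_\sigma$ on $\partial K$ satisfy $(\nabla\Phi(x))^{\rm tan}=\theta^{\rm tan}$, hence $\nabla\Phi(x)=\theta$ (the physical branch), so the unique critical point near $\gamma$ is $x=q_m$, with critical value $\psi_\sigma(q_m)=\Phi(q_m)-\la q_m,\theta\ra+\sigma=T_\gamma+\sigma$ by the sojourn-time formula of Section 2. Its Hessian on $T_{q_m}\partial K$ is invertible precisely because $\det dJ_\gamma(u_\gamma)\neq0$; a direct computation identifies $|\det{\rm Hess}\,\psi_0(q_m)|$ with $|\det dJ_\gamma(u_\gamma)|$ up to the factors $\la\nu(q_1),\omega\ra$ and $\la\nu(q_m),\theta\ra$ (which enter as Jacobians relating the plane-wave and surface parametrizations), while its signature, together with the Maslov index along $\gamma$, combines into an integer $\beta_\gamma$. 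Stationary phase in the $n-1$ surface variables gives the $x$-integral $\sim\lambda^{-(n-1)/2}e^{\ii\lambda(T_\gamma+\sigma)}\cdot(\text{const})$; multiplying by the $(-\ii\lambda)^{n-2}\cdot\ii\lambda$ already present yields an integrand of order $\lambda^{(n-1)/2}$ in $\lambda$, and since $\int_\R\lambda^{(n-1)/2}e^{\ii\lambda s}\,d\lambda$ is a constant multiple of $\delta^{(n-1)/2}(s)$ (recall $n$ is odd), we obtain the stated leading singularity at $\sigma=-T_\gamma$; the remaining amplitude terms ${\mathcal O}(\lambda^{-1})$ and the subleading stationary-phase terms contribute conormal distributions $\delta^{(n-1)/2-j}(\sigma+T_\gamma)$, $j\geq1$, i.e. lower-order singularities. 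Collecting all numerical constants — the normalization $2^{-n}\pi^{1-n}$, the Fourier-transform constant, the $m$ reflection signs and the combined phase — reproduces the displayed coefficient $\Bigl(\frac{1}{2\pi\ii}\Bigr)^{(n-1)/2}(-1)^{m_\gamma-1}\exp\bigl(\ii\frac{\pi}{2}\beta_\gamma\bigr)\bigl|\det dJ_\gamma(u_\gamma)\la\nu(q_1),\omega\ra/\la\nu(q_m),\theta\ra\bigr|^{-1/2}$.

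The main obstacle is the bookkeeping in the last step: showing the geometric-optics parametrix is valid along the whole of $\gamma$ (no accumulation of caustics — guaranteed here because $\gamma$ is ordinary reflecting and of finite length) and, above all, carrying out the explicit identification of the stationary-phase Hessian at $q_m$ with the differential cross section $dJ_\gamma(u_\gamma)$ together with the consistent tracking of every Maslov and signature contribution into the single integer $\beta_\gamma$. Equivalently one may work in the frequency domain, constructing a geometric-optics parametrix for the outgoing solution $v_s(\lambda,\cdot;\omega)$, extracting the large-$|\lambda|$ asymptotics of $a(\lambda,\theta,\omega)$ from the boundary-integral representation in the introduction by stationary phase, and then Fourier-transforming; the computation is the same.
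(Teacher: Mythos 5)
Your outline is correct and follows exactly the route the paper indicates for this result (which it quotes from \cite{P1} without reproducing the proof): a global geometric--optics/Fourier integral operator parametrix for $V$ along the ordinary reflecting ray, localization via the Poisson relation and hypothesis (\ref{eq:2.4}), substitution into the boundary integral for $s$, and stationary phase at the last reflection point, with the non-degeneracy $\det dJ_\gamma(u_\gamma)\neq 0$ supplying the non-degenerate Hessian. The genuinely laborious parts you correctly flag as deferred --- identifying the Hessian with the differential cross section and tracking the Maslov/signature contributions into $\beta_\gamma$ --- are precisely what occupies \cite{P1} and Chapter 9 of \cite{PS1}, so as a reconstruction of the cited argument your sketch is faithful.
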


\medskip

 For strictly convex obstacles we have 
$$m_{\gamma_{+}} = 1,\:\beta_{\gamma_{+}} = -\frac{n-1}{2}, \:\:q_1 = q_m,$$
$ \theta - \omega$ is parallel to $\nu(q_1)$ and 
$$|\det dJ _{\gamma_{+}}(u_{\gamma_{+}})| = 4 |\theta - \omega|^{n-3} {\mathcal K}(x_{+}),$$
where $\gamma_{+}$ is the unique $\ot$-reflecting ray at $x_{+}$, $u_{+}$ is the corresponding point on $Z_{\omega}$ and ${\mathcal K}(x_{+})$ is the Gauss curvature at $x_{+}.$ Thus we obtain the result of Majda \cite{Ma1} (see also \cite{Ma2}) describing the leading singularity at $-T_{\gamma_{+}}.$\\

To obtain an equality in the Poisson relation $(\ref{eq:2.2})$, one needs to know that every $\ot$-ray produces a singularity. To
achieve this, a natural way to proceed would be  to ensure that the properties $(\ref{eq:2.3}),\:( \ref{eq:2.4})$ hold. It is clear, that these properties depend on the global behavior of the $(\omega, \theta)$-rays in the exterior of the obstacle, and in this regard the existence 
of $\ot$-rays with tangent or gliding segments leads to considerable difficulties. Moreover, different ordinary reflecting rays could produce singularities which mutually cancel. By using the properties of $\ot$-rays established in  \cite{PS2}, \cite{PS3}, as well as the fact that for almost all directions $(\omega, \theta)$, the $\ot$-rays are ordinary reflecting (see \cite{St1}), the following was derived in \cite{St1}:

\medskip

\begin{thm} \mbox{\rm (\cite{St1})} There exists a subset $\rr$ of full Lebesgue measure in $\ssn$ such that
for each $(\omega, \theta) \in \rr$ the only $(\omega, \theta)$-rays in $\Omega$ are ordinary reflecting $(\omega, \theta)$-rays and
$${\rm sing} \:\: {\rm supp}_t\: s (t,\theta,\omega) =
\{ -T_{\gamma} : \gamma \in {\mathcal L}_{\omega,\theta} (\Omega)\}\;.$$
\end{thm}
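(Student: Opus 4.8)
The plan is to combine two ingredients: the Poisson relation $(\ref{eq:2.2})$, which gives the inclusion ``$\subseteq$'', and a converse giving ``$\supseteq$'', which requires that each $(\omega,\theta)$-ray actually produces a singularity that is not cancelled. First I would invoke the genericity result of \cite{St1}: for almost every $(\omega,\theta)\in\ssn$ every $(\omega,\theta)$-ray in $\Omega$ is an ordinary reflecting ray (no tangent or gliding segments), so the set $\rr_0$ of such ``good'' directions already has full measure. The point of working on $\rr_0$ is that on it the delicate phenomena associated with tangent/gliding rays — for which no clean asymptotic construction exists — are absent, and Theorem 2.2 is applicable to each individual ray.

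Next I would show that one can shrink $\rr_0$ to a still-full-measure subset $\rr$ on which the separation condition $(\ref{eq:2.3})$ holds simultaneously for all rays: for each $(\omega,\theta)\in\rr$ the sojourn times $T_\gamma$, $\gamma\in\mathcal L_{\omega,\theta}(\Omega)$, are pairwise distinct. This is where I expect the main work to lie, and it is the step most likely to be the real obstacle. The mechanism is that the number of $(\omega,\theta)$-rays is (for good directions) locally finite with sojourn times depending smoothly on the parameters, and the set of $(\omega,\theta)$ for which two distinct branches have coinciding sojourn time is contained in a set of measure zero — this uses the results of \cite{PS2}, \cite{PS3} on the structure and finiteness of the family of $(\omega,\theta)$-rays and on the non-constancy of the sojourn-time differences along the parameter space. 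One must be careful that ``distinct rays'' is counted correctly (two rays with the same reflection points traversed in a way giving the same sojourn time are the same ray) and that the bad set is genuinely lower-dimensional; a Sard-type / real-analyticity argument on the generating functions for the rays is the natural tool. Intersecting with $\rr_0$ keeps full measure.

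Finally, for $(\omega,\theta)\in\rr$ fixed, let $\gamma\in\mathcal L_{\omega,\theta}(\Omega)$. Since $\gamma$ is ordinary reflecting, non-degenerate (for good directions the differential cross section is non-singular, again by \cite{St1}), and its sojourn time is isolated among all rays by the separation condition just established, the continuity of the generalized Hamiltonian flow gives $(\ref{eq:2.4})$, and Theorem 2.2 then yields $-T_\gamma\in\mathrm{sing\,supp}_t\, s(t,\theta,\omega)$ with an explicit leading term $c_\gamma\,\delta^{(n-1)/2}(t+T_\gamma)$, $c_\gamma\neq 0$; because $-T_\gamma$ is isolated there is nothing to cancel it. Combined with the inclusion $(\ref{eq:2.2})$ this gives the asserted equality
$$\mathrm{sing\,supp}_t\, s(t,\theta,\omega)=\{-T_\gamma:\gamma\in\mathcal L_{\omega,\theta}(\Omega)\}.$$
The only subtlety at this last stage is to make sure the full-measure set $\rr$ does not depend on $\gamma$ — it does not, since the separation condition was imposed uniformly over all rays at once.
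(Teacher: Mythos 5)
Your plan coincides with the argument the paper itself indicates for this theorem: the paper does not prove it but attributes it to \cite{St1}, explaining that it follows by combining the genericity of ordinary reflecting (and non-degenerate) $(\omega,\theta)$-rays, the generic separation of sojourn times from \cite{PS2}, \cite{PS3} (so that $(\ref{eq:2.3})$ and hence $(\ref{eq:2.4})$ hold), and the singularity expansion of Theorem 2.1, together with the Poisson inclusion $(\ref{eq:2.2})$. Since this is exactly your decomposition, there is nothing further to compare.
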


This result is the basis for  several interesting inverse scattering results (see \cite{St2}, \cite{St3}).

\section{Trapping obstacles}

\def\sb{\Sigma_b}
\def\si{\Sigma_{\infty}}

Given a generalized bicharacteristic $\gamma$ in $S^*(\Omega)$, its projection $\tilde{\gamma} = \sim (\gamma)$ in
$\Sigma_b$ is called a {\it compressed generalized bicharacteristic}. Let $U_0$ be an open ball containing $K$ and let $C$ be its boundary sphere. For an arbitrary point $z = (x ,\xi) \in \sb$, consider the compressed  generalized bicharacteristic
$$\gamma_z(t) = (x(t),\xi(t)) \in \sb$$
parametrized by the time $t$ and passing through $z$ for $t = 0.$ Denote by $T(z) \in \R^+ \cup \infty$ the maximal $T > 0$ such that $x(t) \in \uu_0$ for $0 \leq t \leq T(z).$ The so called {\it trapping set} is defined by
$$\si = \{(x,\xi) \in \sb:\: x \in C,\: T(z) = \infty \}\;.$$
It follows from the continuity of the compressed generalized Hamiltonian flow that  the trapping set $\si$ is closed in $\Sigma_b$. For simplicity, in the following the compressed generalized bicharacteristics will be called simply generalized ones. The
obstacle  $K$ is called {\it trapping} if $\si \neq \emptyset$, i.e. when 
there exists at least one point $(\hat{x}, \hat{\xi}) \in C \times \sn$ such that the (generalized) trajectory issued from $(\hat{x}, \hat{\xi})$ stays in $U_0$ for all $t \geq 0$.  This provides some information about the behavior of the rays issued from the points $(y, \eta)$ sufficiently close to $(\hat{x}, \hat{\xi})$, however in general it does not yield any information about the geometry of $\ot$-rays. 

Now for every trapping obstacle we have the following

\begin{thm} \mbox{\rm (\cite{PS2}, \cite{PS4})} \label{3.1} 
Let the obstacle $K$ be trapping and satisfy the condition $({\mathcal G})$. Then there exists a sequence of ordinary reflecting $(\omega_m, \theta_m)$-rays  $\gamma_m$ with sojourn times $T_{\gamma_m} \longrightarrow \infty.$
\end{thm}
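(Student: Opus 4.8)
The plan is to start from a single trapping (compressed) generalized bicharacteristic and produce from it infinitely many ordinary reflecting $\ot$-rays with unboundedly large sojourn times. Let $(\hat x,\hat\xi)\in\Sigma_\infty$, so the trajectory $\gamma_{\hat z}(t)$ issued from $\hat z=(\hat x,\hat\xi)$ stays in $U_0$ for all $t\ge 0$. The first step is to run this construction both forward and backward in time: pushing the point slightly off the trapping set in the backward direction, one obtains for large $t>0$ a trajectory that comes in from infinity, spends a long time near $K$ (of order $t$, because of continuity of the compressed flow $\ff_t$ on $\TO/\!\sim$ established by Melrose--Sj\"ostrand), and then escapes. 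More precisely, for each $m$ one chooses a point $z_m$ near the trapping set such that the corresponding trajectory is incoming with some direction $\omega_m$ for $t\le -s_m$, remains in $U_0$ for $-s_m\le t\le s_m$ with $s_m\to\infty$, and is outgoing with some direction $\theta_m$ for $t\ge s_m$; this is where one invokes the results of \cite{PS2}, \cite{PS4} that describe how trajectories shadow the trapping set over long time intervals. The sojourn time $T_{\gamma_m}$ of such a ray — the length of the portion lying in $H_{\omega_m}\cap H_{-\theta_m}$ — is then bounded below by a quantity of order $s_m$, hence $T_{\gamma_m}\to\infty$.

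The second step is to upgrade these $\ot$-rays to \emph{ordinary reflecting} ones. The trajectories produced above are a priori generalized bicharacteristics possibly having tangent or gliding segments near $\partial K$, and we must perturb the directions $(\omega_m,\theta_m)$ to nearby $(\omega_m',\theta_m')$ for which the corresponding ray is genuinely ordinary reflecting with no tangencies. Here I would use precisely the transversality/genericity results quoted in the excerpt: by \cite{St1} the set of $\ot$ for which all $\ot$-rays are ordinary reflecting has full Lebesgue measure, and by the structure theory of \cite{PS2}, \cite{PS3} the geometry of rays near a fixed long trajectory is stable under small perturbations of the endpoints' directions, with the sojourn time varying continuously. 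Thus one can choose the perturbed directions arbitrarily close to $(\omega_m,\theta_m)$ so that the new ray $\gamma_m'$ is ordinary reflecting and still has sojourn time $\ge \frac12 s_m$, say; relabeling, we keep calling these $\gamma_m$ and $(\omega_m,\theta_m)$.

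The main obstacle is the second step: ensuring that the long trapped trajectory, which wants to graze or glide along the boundary, can be replaced by a nearby honestly transversally reflecting ray while retaining control of its length. This is exactly the delicate point handled in \cite{PS2}, \cite{PS3} via the no-tangency and no-gliding genericity arguments under hypothesis $(\mathcal G)$, together with the continuity of $\ff_t$ on the compressed phase space; the condition $(\mathcal G)$ is what rules out the pathological bifurcations of bicharacteristics mentioned in the text and makes the flow well behaved enough for such perturbation arguments. Once this is in place, the diagonal/subsequence argument extracting $s_m\to\infty$ (and hence $T_{\gamma_m}\to\infty$) is routine, and the theorem follows.
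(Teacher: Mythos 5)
Your overall strategy coincides with the paper's: perturb off the (closed) trapping set, use the continuity of the compressed generalized flow of Melrose--Sj\"ostrand to force the escape times of the perturbed trajectories to blow up, and then approximate the resulting long rays by ordinary reflecting ones (a step which, in the paper just as in your sketch, is delegated to the earlier works cited there). The continuity argument you give for why the time spent in $U_0$ tends to infinity is the correct one, and the passage from ``long time in $U_0$'' to ``large sojourn time'' is also fine.

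There is, however, a genuine gap at the central step. You simply assert that for each $m$ one can pick $z_m$ near the trapping set whose trajectory is incoming with some direction $\omega_m$ for $t\le -s_m$, stays in $U_0$ for $|t|\le s_m$, and is outgoing for $t\ge s_m$ --- that is, that the long trajectory really is an $(\omega_m,\theta_m)$-ray, unbounded in \emph{both} time directions --- and you justify this only by an appeal to ``shadowing'' results in the very references to which the theorem is attributed. A point chosen merely close to $\Sigma_\infty$ need not have this property: taking it outside the closed set $\Sigma_\infty$ guarantees that its forward trajectory leaves $U_0$, but its backward trajectory may perfectly well stay bounded, in which case the curve is not an $(\omega,\theta)$-ray at all and has no sojourn time. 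This is precisely what Proposition 3.2 (the statement that the set of points of $S^*_C(\Omega)$ with bounded trajectory has Lebesgue measure zero) is needed for, and your proposal neither uses it nor supplies a substitute. The paper resolves the difficulty in two stages: it first takes $\hat z\in\partial\Sigma_\infty$, approximates it by points $z_m\notin\Sigma_\infty$, passes to the limit of the exit data $(y_m,\omega_m)$ to extract a trapped incoming direction $\omega_0$, then reruns the whole argument for rays issued from the hyperplane $Z_{\omega_0}$ with direction $\omega_0$ (so that unboundedness in one time direction is automatic, the ray being a free straight line there), and finally invokes Proposition 3.2 to discard the measure-zero set of initial data whose trajectories remain bounded. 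Without this, or an equivalent device, your construction produces long generalized bicharacteristics but not the required $(\omega_m,\theta_m)$-rays with $T_{\gamma_m}\to\infty$.
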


To prove this we use the following 

\begin{prop} \mbox{\rm (\cite{LP}, \cite{St1})} \label{3.2}
The set of points $(x, \xi) \in S^*_C(\Omega)= \{(x, \xi) \in T^*(\Omega):\: x \in C,\: |\xi| = 1\}$ such that the trajectory $\{\ff_t(x, \xi): \: t \geq 0\}$ issued from $(x, \xi)$ is bounded has Lebesgue measure zero in  $S^*_C(\Omega).$
\end{prop}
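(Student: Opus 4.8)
The plan is to run a classical no-recurrence argument based on the invariance of the Liouville measure under the generalized geodesic flow. Write $a$ for the radius of the ball $U_0$ and $C=\partial U_0$; since the compact set $\overline K$ lies at positive distance from $C$, the trajectories of $\ff_t$ are ordinary straight lines in a neighbourhood of $C$. Let $\mu$ denote the Liouville measure on $S^*(\Omega)$, put $M=\{(x,\xi)\in S^*(\Omega):\ x\in\overline{U_0}\}$ (so that $\mu(M)=\mathrm{vol}_n(\overline{U_0}\setminus K)\cdot\mathrm{vol}_{n-1}(\sn)<\infty$), and let $\mu_C$ be the natural smooth measure on $S^*_C(\Omega)$. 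I shall use that, under the condition $({\mathcal G})$, the flow $\ff_t$ preserves $\mu$, the set of points whose trajectory ever meets $\partial K$ tangentially being $\mu$-null (this is where the propagation and structure results of \cite{MS} and the measure-theoretic facts recorded in \cite{PS1} enter).

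The first step is a reduction. If the forward trajectory of $z=(x,\xi)\in S^*_C(\Omega)$ is bounded, then $x(t)\in\overline{U_0}$ for all $t\ge 0$ and $\langle x,\xi\rangle<0$. Indeed, suppose $x(t_0)\in C$ with $\langle x(t_0),\xi(t_0)\rangle\ge 0$ at some $t_0\ge 0$; then near $t_0$ the trajectory coincides with the line $x(t_0)+(t-t_0)\xi(t_0)$, and $|x(t_0)+s\,\xi(t_0)|^2=a^2+2s\langle x(t_0),\xi(t_0)\rangle+s^2\ge a^2$ for $s\ge 0$, so the trajectory leaves $\overline{U_0}$ and runs off to infinity, contradicting boundedness; in particular, taking $t_0=0$ forces $\langle x,\xi\rangle<0$. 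Write $N=\{(x,\xi)\in S^*_C(\Omega):\ \langle x,\xi\rangle<0\}$ and let $\mathcal{T}\subset N$ be the set of forward-bounded points of $S^*_C(\Omega)$; it remains to show $\mu_C(\mathcal{T})=0$.

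The core step is to show that the flow-out map $\Phi\colon\mathcal{T}\times[0,\infty)\to M$, $\Phi(z,t)=\ff_t(z)$, is injective. Suppose $\ff_t(z)=\ff_s(z')$ with $z,z'\in\mathcal{T}$ and $t>s$; then $z'=\ff_{t-s}(z)$, so the trajectory of $z$ passes through $z'\in S^*_C(\Omega)$ at the time $u_0:=t-s>0$, and near $u_0$ it coincides with the line $u\mapsto x'+(u-u_0)\xi'$. Since $\langle x',\xi'\rangle<0$ we get $|x'+v\,\xi'|^2=a^2+2v\langle x',\xi'\rangle+v^2>a^2$ for small $v<0$, i.e. $x(u)\notin\overline{U_0}$ for $u$ slightly below $u_0$; but $u_0-\varepsilon>0$ for small $\varepsilon>0$, which contradicts (via the reduction) that the trajectory of $z$ remains in $\overline{U_0}$ for all $t\ge 0$. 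Hence $\Phi$ is injective. Since $\mathcal{T}\subset N$ and $\ff_t$ is everywhere transverse to $S^*_C(\Omega)$ along $N$, the map $\Phi$ is also an immersion, so it restricts to an embedding of $\mathcal{T}\times[0,T]$ for each $T>0$; the flow-box (coarea) formula, combined with the $\ff_t$-invariance of $\mu$, then gives $\mu\big(\Phi(\mathcal{T}\times[0,T])\big)=T\int_{\mathcal{T}}J_0\,d\mu_C$, where $J_0$ is the ($t$-independent) transversal Jacobian, a strictly positive smooth function on $N$ (it is comparable to $|\langle x,\xi\rangle|$, vanishing only on $\partial N$). As $\Phi(\mathcal{T}\times[0,T])\subset M$ we obtain $T\int_{\mathcal{T}}J_0\,d\mu_C\le\mu(M)<\infty$ for all $T>0$, and letting $T\to\infty$ forces $\int_{\mathcal{T}}J_0\,d\mu_C=0$, hence $\mu_C(\mathcal{T})=0$ because $J_0>0$ throughout $N$.

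The main difficulty I anticipate is not the geometry but the measure-theoretic bookkeeping for the \emph{generalized} flow: one must check carefully that $\mu$ is genuinely $\ff_t$-invariant and that the flow-box change of variables is legitimate in spite of the tangencies and gliding segments that $\ff_t$ may develop along $\partial K$. The purpose of the condition $({\mathcal G})$ and of the structure theory of \cite{MS} is precisely to confine these exceptional orbits to a $\mu$-null set, after which the argument above applies unchanged; once this and the positivity of $J_0$ on $N$ are in place, the proof is complete.
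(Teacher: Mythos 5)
The paper does not actually prove Proposition 3.2: it is quoted from Lax--Phillips \cite{LP} and Stoyanov \cite{St1}, so your argument can only be compared with the classical proof behind those references --- and it is essentially that proof. Your reduction (a forward-bounded ray from $C$ must satisfy $\langle x,\xi\rangle<0$ and stay in $\overline{U_0}$ for all $t\ge 0$, because near $C$ the flow is free and any exit with non-negative radial component escapes to infinity) and your injectivity argument for the flow-out map $\Phi(z,t)={\mathcal F}_t(z)$ on $\mathcal{T}\times[0,\infty)$ are correct (and $\mathcal{T}$ is closed in $S^*_C(\Omega)$ by continuity of the compressed flow, so measurability is not an issue). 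The strategy --- Liouville-measure invariance plus finiteness of $\mu(M)$ forces the transversal measure of $\mathcal{T}$ to vanish --- is exactly the standard Poincar\'e-type argument the Proposition rests on.

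One step is overstated and needs a (routine) repair, which you partly anticipate. For large $T$ the map $\Phi$ is \emph{not} smooth on $\mathcal{T}\times[0,T]$: the generalized flow is discontinuous at reflections and only continuous after compression, and trapped trajectories may carry tangencies and gliding segments, so the words ``immersion'', ``embedding'' and ``$t$-independent smooth transversal Jacobian'' are not available, and the coarea formula cannot be applied to $\Phi$ globally as written. The correct bookkeeping is local-in-time: fix $t_0<\mathrm{dist}(C,\partial K)$, so that on $N\times[0,t_0]$ the flow is the free straight-line flow and the flow-box identity $\mu\bigl(\Phi(A\times[0,t_0])\bigr)=t_0\int_A J_0\,d\mu_C$ with $J_0(x,\xi)=|\langle x,\xi\rangle|/a$ is elementary; then write $\Phi(\mathcal{T}\times[0,kt_0])$ as the union of ${\mathcal F}_{jt_0}\bigl(\Phi(\mathcal{T}\times[0,t_0])\bigr)$, $0\le j<k$, which all have the same $\mu$-measure by invariance of $\mu$ under ${\mathcal F}_t$ and are pairwise disjoint up to $\mu$-null overlaps (at the endpoint times) by your injectivity claim; summing gives the linear growth in $T$ and hence $\mu_C(\mathcal{T})=0$ since $J_0>0$ on $N\supset\mathcal{T}$. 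Finally, the $\mu$-invariance of the generalized flow is itself the nontrivial input: it holds because trajectories meeting $\partial K$ tangentially (in particular all gliding ones) form a $\mu$-null set, off which ${\mathcal F}_t$ is the ordinary billiard flow; this is precisely the material you must import from \cite{MS}, \cite{PS1}, \cite{St1}. With that imported and the flow-box step localized as above, your proof is complete and coincides with the argument of the cited sources.
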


\begin{proof} Assume $K$ is trapping and satisfies the condition $({\mathcal G})$.
We will establish the existence of  $(\omega, \theta_m)$-rays with sojourn times $T_m \to \infty$ 
for some $\omega \in \sn$ suitably fixed. It is easy to see that $\sb \setminus \si \neq \emptyset$. 
Since $K$ is trapping, we have $\si \neq \emptyset$, so the boundary $\partial \si$ of $\si$ in $\sb$ is not empty. 
Fix an arbitrary $\hat{z} \in \partial \si$ and take an arbitrary sequence $z_m = (0, x_m, 1, \xi_m) \in \sb$, so that 
$z_m \notin \si$ for every $m \in \N$ and $z_m \longrightarrow \hat{z}.$ Consider the compressed generalized 
bicharacteristics $\delta_m = (t, x_m(t), 1, \xi_m(t))$ passing through $z_m$ for $t = 0$ with sojourn times $T_{z_m} < \infty.$ If the sequence $\{T_{z_m}\}$ is bounded, one gets a contradiction with the fact that $\hat{z} \in \si.$ 
Thus, $\{T_{z_m}\}$ is unbounded, and replacing the sequence $\{ z_m\}$ by an appropriate subsequence of its, we may 
assume that  $T_{z_m} \longrightarrow + \infty$. Setting 
$$y_m = x_m(T(z_m)) \in C,\: \omega_m = \xi_m(T(z_m)) \in \sn$$
and passing again to a subsequence if necessary, we may assume that $y_m \to z_0 \in C,\: \omega_m \to \omega_0 \in \sn.$ 
Then for the generalized bicharacteristic $\delta_{\mu} (t) = (t, x(t), 1, \xi(t))$ issued from $\mu = (0, z_0, 1, \omega_0)$ we have $T(\delta_{\mu}) = \infty.$ Next, consider the hyperplane $Z_{\omega_0}$ passing through $z$ and orthogonal to $\omega_0$ and the set of points $u \in Z_{\infty}$ such that the generalized bicharacteristics $\gamma_u$ issued from $u \in Z_{\infty}$ with 
direction $\omega_0$ satisfies the condition $T(\gamma_u) = \infty.$ The set $Z_{\infty}\cap Z_{\omega_0}$ is closed in $Z_{\omega_0}$ and $Z_{\omega_0} \setminus Z_{\infty} \neq \emptyset.$ Repeating the above argument, we obtain rays $\gamma_m$ with sojourn times $T_{\gamma_m} \longrightarrow +\infty.$ Using Proposition \ref{3.2}, we may assume that each ray $\gamma_m$ is unbounded
in both directions, i.e. $\gamma_m$ is an $(\omega_0, \theta_m)$-ray for some $\theta_m\in \sn$. Moreover, according to results in  [1] and [16], these rays can be approximated by ordinary reflecting ones, so we may assume that 
each $\gamma_m$ is an ordinary reflecting $(\omega_m,\theta_m)$-ray for some
$\omega_m,\theta_m \in \sn$. This completes the proof.
\end{proof}

\def\ow{{\mathcal O}(W)}

To show that the rays $\gamma_m$ constructed in Theorem \ref{3.1} produce singularities, we need to check the condition $(\ref{eq:2.4})$. In general the ordinary reflecting ray $\gamma_m$ could be degenerate and we have  to replace $\gamma_m$ by another ordinary reflecting non-degenerate $(\theta_m', \: \omega_m')$-ray $\gamma_m'$ with sojourn time $T_m'$ sufficiently close to $T_{\gamma_m}.$ Our argument concerns the rays issued from a small neighborhood $W \subset C \times \sn$ of the point $(z_0, \omega_0) \in C \times \sn$ introduced in the proof of Theorem 3.1.\\

Let ${\mathcal O}(W)$ be  the set of all pairs of directions $\ot \in \ssn$ such that there exists an ordinary reflecting  $\ot$-ray issued 
from $(x, \omega) \in W$ with {\it outgoing} direction $\theta \in \sn.$ To obtain convenient approximations with $\ot$-rays issued from $W$, 
it is desirable to know that $\ow$ has a positive measure in $\ssn$ for all sufficiently small neighborhoods $W \subset C \times \sn$ of $(z_0, \omega_0)$. Roughly speaking this means that the trapping generalized bicharacteristic $\delta_{\mu}(t)$ introduced above is non-degenerate in some sense. More precisely, we introduce the following

\begin{defn}\label{def:1}
The generalized bicharacteristic $\gamma$  issued from  $(y, \eta) \in C \times \sn$ is called weakly non-degenerate if for every neighborhood $W \subset C \times \sn$ of $(y, \eta)$ the set $\ow$ has a positive measure in $\ssn.$
\end{defn}

The above definition generalizes that of a non-degenerate ordinary reflecting ray $\gamma$ given in Section 2. Indeed, let $\gamma$ 
be an ordinary reflecting non-degenerate $(\omega_0, \theta_0)$-ray issued from $(x_0, \omega_0) \in C \times \sn$. Let $Z = Z_{\omega_0}$ and  consider the $C^{\infty}$ map
$$D = X \times \Gamma \ni (x, \omega) \longrightarrow f(x, \omega) \in \sn,$$
where $X \subset Z$ is a small neighborhood of $x_0$, $\Gamma \subset \sn$ is a small neighborhood of $\omega_0$, and $f(x, \omega)$ is the outgoing direction of the ray issued from $x$ in direction $\omega$. We have $\det f'_x(x_0, \omega_0) \neq 0$ and we may assume that  $D$ is chosen small enough so that $\det f'_x(x, \omega) \neq 0$ for $(x, \omega) \in \bar{D}.$ Set 
$$\max_{(x, \omega) \in \bar{D}} \|(f_x'(x, \omega))^{-1}\| = \frac{1}{\alpha}.$$
Then for small $\epsilon > 0$ we have $\|f_x'(x, \omega) - f_x'(x_0, \omega_0) \| \leq \frac{\alpha}{4}$, provided 
$\|x - x_0\| < \epsilon,\: \|\omega - \omega_0\| < \epsilon$ and 
$$X_{\epsilon} = \{ x \in Z:\: \|x - x_0\| < \epsilon \} \subset X,\:\: \Gamma_{\epsilon} = \{\omega \in \sn:\:\|\omega - \omega_0\| < \epsilon\} \subset \Gamma.$$
Next consider the set
$$\Xi_{\epsilon} = \{ \theta \in \sn:\: \|\theta - \theta_0\| < \frac{\epsilon \alpha}{4}\}.$$
Then taking $\epsilon' \in (0, \epsilon)$ so that $\|f(x_0, \omega) - \theta_0\| < \frac{\epsilon \alpha}{4}$ for 
$\omega \in \Gamma_{\epsilon'}$, and applying the inverse mapping theorem (see Section 5 in \cite{PS2}), we conclude that for every fixed $\omega \in \Gamma_{\epsilon'}$ and every fixed $\theta \in \Xi_{\epsilon'}$ we can find $x_{\ot} \in X_{\epsilon}$ with 
$f(x_{\ot}, \omega) = \theta.$ Consequently,  the corresponding set of directions 
$ \Gamma_{\epsilon'} \times \Xi_{\epsilon'} \subset {\mathcal O}(W)$ has positive measure in $\ssn$.  
This argument works for every neighborhood of $(x_0, \omega_0)$, so $\gamma$ is weakly non-degenerate according to Definition \ref{def:1}.\\

\begin{rem} In general a weakly non-degenerate ordinary reflecting ray does not need to be non-degenerate. To see this, first notice that the set of those $(y, \eta) \in C\times \sn$ that generate
weakly non-degenerate bicharacteristics is closed in $C\times \sn$. Now consider the special case when $K$ is convex with vanishing Gauss curvature at some point $x_0 \in \partial K$ and strictly positive Gauss curvature at any other point of $\partial K$. Consider a reflecting ray $\gamma$ in $\R^n$ with a single
reflection point at $x_0$. Then, as is well-known,  $\gamma$ is degenerate, that is the differential cross section vanishes. However, arbitrarily close to $\gamma$ we can choose an ordinary reflecting ray $\delta_m$ with a single reflection point $x_m \neq x_0$. Then $\delta_m$ is non-degenerate and hence it is weakly non-degenerate. Thus, $\gamma$ can be approximated arbitrarily well with weakly non-degenerate rays, and therefore $\gamma$ itself is weakly non-degenerate.
\end{rem}
 
Now we have a stronger version of Theorem \ref{3.1}.

\begin{thm} Let the obstacle $K$  have at least one trapping weakly non-degenerate bicharacteristic $\delta$ issued from $(y, \eta) \in C \times \sn$ and let $K$ satisfy $({\mathcal G}).$ Then there exists a sequence of ordinary reflecting non-degenerate $(\omega_m, \theta_m)$-rays  $\gamma_m$ with sojourn times $T_{\gamma_m} \longrightarrow \infty.$
\end{thm}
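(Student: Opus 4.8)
The plan is to combine Theorem~\ref{3.1} with the discussion preceding the statement, upgrading the ordinary reflecting rays produced there to \emph{non-degenerate} ones by a small perturbation of the directions. Recall that in the proof of Theorem~\ref{3.1} one fixes a point $(z_0, \omega_0) \in C \times \sn$ lying on a trapping generalized bicharacteristic $\delta_\mu$, and constructs ordinary reflecting $(\omega_m, \theta_m)$-rays $\gamma_m$ with $T_{\gamma_m} \to \infty$, all of which are issued from points in an arbitrarily small neighborhood $W \subset C\times \sn$ of $(z_0,\omega_0)$. The extra hypothesis here is precisely that this trapping bicharacteristic $\delta$ (which we may take to be $\delta_\mu$, after noting that the weakly non-degenerate points form a closed set and that the construction in the proof of Theorem~\ref{3.1} lets us start from $(y,\eta)$) is weakly non-degenerate, i.e. $\Oo(W)$ has positive Lebesgue measure in $\ssn$ for every such $W$.

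First I would recall, from Theorem~2.5 (St\i1), that there is a subset $\rr \subset \ssn$ of full Lebesgue measure such that for every $(\omega,\theta) \in \rr$ all $(\omega,\theta)$-rays in $\Omega$ are ordinary reflecting and, moreover, each such ray actually produces a singularity of the scattering kernel — in particular (combining with Theorem~2.4) each such ordinary reflecting $(\omega,\theta)$-ray must be non-degenerate, since a degenerate ray would fail to contribute the delta singularity needed for the equality in the Poisson relation; more directly, the full-measure set $\rr$ can be intersected with the full-measure set of $(\omega,\theta)$ for which every $(\omega,\theta)$-ray is non-degenerate, this last set having full measure because the ``bad'' directions form the critical values of the relevant cross-section maps and hence a null set by Sard's theorem. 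Since $\Oo(W)$ has positive measure, the intersection $\Oo(W) \cap \rr'$, where $\rr'$ is this full-measure set of ``good'' directions, is nonempty — in fact of positive measure.

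Now I would carry out the approximation. For each $m$, the ray $\gamma_m$ is an ordinary reflecting $(\omega_m,\theta_m)$-ray issued from some $(x_m,\omega_m) \in W$ with a finite number $k_m$ of reflections and sojourn time $T_{\gamma_m} \to \infty$. Fix a shrinking sequence of neighborhoods $W_m \downarrow \{(z_0,\omega_0)\}$ (or $\downarrow\{(y,\eta)\}$) and, using that $\gamma_m$ is ordinary reflecting, note that nearby initial data in $C\times\sn$ with direction close to $\omega_m$ still generate ordinary reflecting rays with the same combinatorics $k_m$ and with sojourn time close to $T_{\gamma_m}$ (this is the local structural stability of ordinary reflecting rays used already in Section~2 and in \cite{PS2}). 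Hence there is an open set $V_m \subset \ssn$ of directions near $(\omega_m,\theta_m)$, all realized by ordinary reflecting rays issued from $W_m$, whose sojourn times lie within $1$, say, of $T_{\gamma_m}$. Since $V_m$ has positive measure and $\rr'$ has full measure, pick $(\omega_m',\theta_m') \in V_m \cap \rr'$; the corresponding ray $\gamma_m'$ is then an ordinary reflecting $(\omega_m',\theta_m')$-ray, it is non-degenerate because $(\omega_m',\theta_m') \in \rr'$, and its sojourn time $T_m' $ satisfies $|T_m' - T_{\gamma_m}| < 1$, so $T_m' \to \infty$. Relabeling $\gamma_m'$ as $\gamma_m$ gives the asserted sequence.

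The main obstacle, and the point requiring the most care, is the structural-stability step: verifying that an ordinary reflecting $(\omega_m,\theta_m)$-ray issued from the boundary sphere persists — with the same number of reflection points and continuously varying sojourn time — under small perturbations of both the emission point in $C$ and the incoming direction. For a single ray this is the implicit function theorem applied to the reflection map (exactly as in the weakly-non-degenerate argument recalled before the theorem, and in Section~5 of \cite{PS2}), but one must ensure the perturbation neighborhood can be taken inside $W_m$ and that the number of reflections does not jump; this is where the ``ordinary reflecting'' hypothesis (no tangencies) is essential, since it guarantees transversality of all reflections and hence openness of the configuration. Granting that, the measure-theoretic selection of a non-degenerate direction in the perturbed family is immediate, and the proof is complete.
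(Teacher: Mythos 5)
There is a genuine gap at precisely the step you flag as the delicate one. Starting from the (possibly degenerate) ordinary reflecting rays $\gamma_m$ produced by Theorem \ref{3.1}, you claim that perturbing the emission point in $C$ and the incoming direction yields an \emph{open} (hence positive-measure) set $V_m\subset\ssn$ of realized direction pairs near $(\omega_m,\theta_m)$, so that $V_m\cap\rr'$ is nonempty. Structural stability of ordinary reflecting rays does give nearby ordinary reflecting rays with the same number of reflections and nearby sojourn times, but it does \emph{not} give openness of the image: the realized pairs form the image of $(x,\omega)\mapsto(\omega,f(x,\omega))$, where $f$ is the outgoing-direction map, and openness of that image near $(\omega_m,\theta_m)$ is exactly what the condition $\det f'_x\neq 0$ (non-degeneracy of $\gamma_m$) is designed to guarantee via the inverse mapping theorem. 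That is the property you are trying to arrange, so you cannot assume it; if $\gamma_m$ is degenerate (e.g.\ its reflection points lie on flat pieces of $\partial K$, as in the remark following the theorem), $V_m$ can have measure zero and $V_m\cap\rr'$ can be empty. A secondary problem: the inference that directions in Stoyanov's full-measure set automatically give non-degenerate rays ``because a degenerate ray would fail to contribute the delta singularity'' is not valid --- a degenerate ray may still produce a weaker singularity, and the equality in the Poisson relation does not require a $\delta^{(n-1)/2}$ term. Your alternative Sard-type description of the good set is the correct one, and is in fact what the paper uses.

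The paper's proof bypasses the individual rays $\gamma_m$ altogether, and this is how the gap should be closed. By continuity of the compressed generalized flow, choose neighborhoods $W_m\ni(y,\eta)$ in $C\times\sn$ such that every trajectory issued from $W_m$ stays in $U_0$ for time $>m$. Weak non-degeneracy of $\delta$ is then used exactly where it is needed: it says that $\Oo(W_m)$ has positive measure in $\ssn$ for every $m$. Sard's theorem, applied to the projection $(x,\omega,z,\theta)\mapsto(\omega,\theta)$ of the set $F_m$ of ordinary reflecting configurations issued from $W_m$, removes a null set $\D_m$ of critical values, and any choice $(\omega_m,\theta_m)\in\Oo(W_m)\setminus\D_m$ gives an ordinary reflecting \emph{non-degenerate} ray issued from some $z_m\in W_m$. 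Its sojourn time satisfies $T_m\geq m-2a$ simply because the ray is trapped in $U_0$ for time $>m$; no comparison with a prescribed $\gamma_m$, and hence no structural-stability step, is required. Iterating with $q(m)>\max\{m+1,T_m\}$ produces the desired sequence.
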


\begin{proof} Let $W_m \subset C \times \sn$ be a neighborhood of $(y, \eta)$ such that for every $z \in W_m$ the generalized bicharacteristic $\gamma_z$ issued from $z$ satisfies the condition $T(\gamma_z) > m$. The continuity of the compressed generalized flow guarantees the existence of $W_m$ for all $m \in \N$. Moreover, we have $W_{m+1} \subset W_m.$ 
Consider the {\it open subset} $F_m$ of $C \times \sn \times C \times \sn$ consisting of those
$(x, \omega, z, \theta)$ such that $(x, \omega) \in W_m$ and there exists an ordinary reflecting $\ot$-ray
 issued from $ (x, \omega) \in W_m$ and passing through $z$ with direction $\theta$.

The projection $F_m \ni (x, \omega, z, \theta) \longrightarrow (\omega, \theta)$ is smooth and Sard's theorem implies the existence of a set ${\mathcal D}_m \subset \ssn$ with measure zero so that if $\ot \notin {\mathcal D}_m$ the corresponding $\ot$-ray issued from $(x, \omega) \in W_m$  is non-degenerate. Then the set ${\mathcal O}(W_m) \setminus {\mathcal D}_m$ has a positive measure and taking $(\omega_m, \theta_m) \in  {\mathcal O}(W_m) \setminus {\mathcal D}_m$ we obtain an ordinary reflecting non-degenerate $(\omega_m,\: \theta_m)$-ray $\delta_m$ with sojourn time $T_m$ issued from $z_m \in W_m$. Next we choose  
$$q(m) > \max\{ m+1, T_m\},\: q(m) \in \N$$
 and repeat the same argument for $W_{q(m)}$ and $F_{q(m)}.$ This completes the proof.
\end{proof}

\begin{rem} In general,  a generalized trapping ray $\delta$ can be weakly degenerate if its reflection points lie on flat regions of the boundary. In the case when $K$ is a finite disjoint union of several convex domains sufficient conditions  for a trapping ray to be weakly non-degenerate are given in \cite{PS3}. On the other hand, we expect that the sojourn time $T_{\gamma}$ of an ordinary reflecting ray $\gamma$ may  produce a singularity of the scattering kernel if the condition  $(\ref{eq:2.4})$ is replaced by some weaker one. For this purpose one needs a generalization of Theorem 2.1 based on the asymptotics of oscillatory integrals with degenerate critical points.
\end{rem}
 
Now assume that $\gamma$ is an ordinary reflecting non-degenerate $\ot$-ray with sojourn time $T_{\gamma}$ issued from $(x, \xi) \in C \times \sn.$ For a such ray the condition $(\ref{eq:2.4})$ is not necessarily fulfilled. Since $\gamma$ is non-degenerate, there are no $\ot$-rays $\delta$ with sojourn time $T_{\gamma}$ issued from points in a small neighborhood of $(x, \xi).$ This is not sufficient for $(\ref{eq:2.4})$ and we must take into account all $\ot$-rays. The result in \cite{St1} says that for almost all directions $\ot \in \ssn$ all $\ot$-rays are reflecting ones and the result in \cite{PS2} implies the property $(\ref{eq:2.3})$ for the sojourn times of ordinary reflecting rays $\ot$-ray, provided that $\ot$ is outside some set of measure zero. Thus we can approximate $\ot$ by directions $(\omega', \theta')$ for which the above two properties hold. Next, the fact that $\gamma$ is non-degenerate combined with the inverse mapping theorem make possible to find an ordinary reflecting non-degenerate $(\omega', \theta')$-ray $\gamma'$ with sojourn time $T_{\gamma}'$ sufficiently close to $T_{\gamma}$ so that $(\ref{eq:2.3})$ and $(\ref{eq:2.4})$ hold for $\gamma'$. We refer to Section 5 in \cite{PS2} for details concerning the application of the inverse mapping theorem.
Finally, we obtain the following

\begin{thm}  Under the assumptions of Theorem $3.6$ there exists a sequence $(\omega_m, \theta_m) \in \ssn$ and ordinary reflecting non-degenerate  $(\omega_m, \theta_m)$-rays $\gamma_m$ with sojourn times $T_m \longrightarrow \infty$ so that
\begin{equation} \label{eq:s}
-T_m \in {\rm sing}\: {\rm supp}\: s(t, \omega_m, \theta_m),\:\: \forall m \in \N.
\end{equation}
\end{thm}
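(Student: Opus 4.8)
The plan is to combine Theorem 3.6 (which produces the sequence of ordinary reflecting non-degenerate rays with escaping sojourn times) with Theorem 2.1 (which converts a non-degenerate ordinary reflecting ray into a delta-type singularity of the scattering kernel, provided the separation condition (\ref{eq:2.4}) holds). The only gap between these two statements is that Theorem 3.6 delivers rays $\gamma_m$ which are non-degenerate \emph{locally}, i.e.\ there are no competing $\ot$-rays issued from a small neighborhood of the initial point of $\gamma_m$ carrying the same sojourn time; but (\ref{eq:2.4}) is a \emph{global} requirement, ruling out \emph{any} $\ot$-ray in $\Omega$ with sojourn time $T_{\gamma_m}$. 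So the main task is a perturbation step: replace each $\gamma_m$ by a nearby ray $\gamma_m'$ for which the full condition (\ref{eq:2.4}) is in force, while keeping $T_m' = T_{\gamma_m'}$ large (indeed $T_m' \to \infty$).

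First I would invoke Theorem 3.6 to fix, for each $m\in\N$, an ordinary reflecting non-degenerate $(\omega_m,\theta_m)$-ray $\gamma_m$ with sojourn time $T_{\gamma_m}\to\infty$, issued from some $(x_m,\xi_m)\in C\times\sn$. Next, for the perturbation, I would use the two ingredients recalled in the paragraph preceding the statement: (i) by \cite{St1} there is a full-measure set $\rr\subset\ssn$ of directions for which every $\ot$-ray is ordinary reflecting and ${\rm sing\,supp}_t\, s(t,\theta,\omega)=\{-T_\gamma:\gamma\in\mathcal L_\ot(\Omega)\}$; (ii) by \cite{PS2} there is a set of full measure in $\ssn$ whose directions satisfy the distinctness property (\ref{eq:2.3}), namely no two distinct $\ot$-rays share a sojourn time. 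Intersecting these two full-measure sets gives a full-measure set $\rr'\subset\ssn$ of ``good'' directions.

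The heart of the argument is then the inverse-function-theorem step, carried out exactly as in Section 5 of \cite{PS2}: because $\gamma_m$ is non-degenerate, the map sending the initial point (on $Z_{\omega}$, near $x_m$) and the incoming direction (near $\omega_m$) to the outgoing direction has invertible differential, so the set $\mathcal O(W)$ of outgoing/incoming direction pairs realized by ordinary reflecting rays issued from a small neighborhood $W$ of $(x_m,\xi_m)$ contains an open set accumulating at $(\omega_m,\theta_m)$; moreover on that open set the ray depends smoothly on $\ot$ and so does its sojourn time. Choosing $(\omega_m',\theta_m')\in\mathcal O(W)\cap\rr'$ close enough to $(\omega_m,\theta_m)$, I obtain an ordinary reflecting non-degenerate $(\omega_m',\theta_m')$-ray $\gamma_m'$ whose sojourn time $T_m'$ is as close as desired to $T_{\gamma_m}$; in particular I can arrange $T_m'> m$, so $T_m'\to\infty$. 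Since $(\omega_m',\theta_m')\in\rr'$, properties (\ref{eq:2.3}) and (\ref{eq:2.4}) hold for $\gamma_m'$. Relabelling $(\omega_m',\theta_m')$ as $(\omega_m,\theta_m)$ and $T_m'$ as $T_m$, Theorem 2.1 applies to each $\gamma_m$ and yields
\begin{equation*}
-T_m \in {\rm sing}\:{\rm supp}\: s(t,\omega_m,\theta_m),\qquad \forall m\in\N,
\end{equation*}
which is exactly (\ref{eq:s}).

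I expect the main obstacle to be the control of the sojourn time under the perturbation $\ot\rightsquigarrow(\omega_m',\theta_m')$: one must be sure that shrinking the neighborhood $W$ (needed to keep $\mathcal O(W)$ clustering at $(\omega_m,\theta_m)$ and to keep the ray ordinary reflecting and non-degenerate) is compatible with simultaneously landing in the full-measure set $\rr'$ \emph{and} keeping $T_m'$ large — this is where the smooth dependence of $\gamma_m'$ and of $T_m'$ on $\ot$, together with the fact that $\rr'$ has full measure hence meets every open set, has to be used carefully, precisely as in \cite{PS2}. A secondary point worth stating explicitly is that the perturbed directions can be taken distinct for distinct $m$ and that nothing forces the $(\omega_m,\theta_m)$ to converge, so (\ref{eq:s}) is a statement about a genuine sequence of scattering kernels $s(\cdot,\omega_m,\theta_m)$, each with its own far-out singularity, rather than about a single kernel.
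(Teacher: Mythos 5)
Your proposal is correct and follows essentially the same route as the paper: the text preceding the theorem carries out exactly this argument, perturbing the rays from Theorem 3.6 into directions lying in the full-measure sets of \cite{St1} and \cite{PS2} (so that (\ref{eq:2.3}) and hence (\ref{eq:2.4}) hold), using the inverse mapping theorem as in Section 5 of \cite{PS2} to keep the perturbed ray ordinary reflecting, non-degenerate, and with nearby sojourn time, and then applying Theorem 2.1. Your identification of the local-versus-global gap in condition (\ref{eq:2.4}) as the crux is precisely the point the paper emphasizes.
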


The relation $(\ref{eq:s})$ was called property $(S)$ in \cite{PS2} and it was conjectured that every trapping obstacle has the property $(S)$. The above result says that this is true if the generalized Hamiltonian flow is continuous and if there is at least one weakly non-degenerate trapping ray $\delta$. The assumption that $\delta$ is weakly non-degenerate has been omitted in Theorem 8 in \cite{PS4}. 

\section{Trapping rays and estimates of the scattering amplitude}

The scattering resonances are related to the behavior of the modified resolvent of the Laplacian.  For $\Im \lambda <  0$ consider the outgoing resolvent $R(\lambda) = (-\Delta - \lambda^2)^{-1}$ 
of the Laplacian in $\Omega$ with Dirichlet boundary conditions on $\partial K.$  The outgoing condition means that for $f \in C_0^{\infty}(\Omega)$ there exists $g(x) \in C_0^{\infty}(\R^n)$ so that we have 
$$R(\lambda) f(x) = R_0(\lambda) g(x),\:\: |x| \to \infty,$$
where 
$$R_0(\lambda) = (-\Delta - \lambda^2)^{-1}: L^2_{\rm comp} (\R^n) \longrightarrow H^2_{\rm loc}(\R^n)$$
 is the outgoing resolvent of the free Laplacian in $\R^n$. The operator
$$R(\lambda) : L_{\rm comp}^2(\Omega) \ni f \longrightarrow R(\lambda) f \in  H^2_{{\rm loc}}(\Omega)$$
has a meromorphic continuation in $\C$ with poles $\lambda_j, \: \Im \lambda_j > 0$, called {\it resonances} (\cite{LP}). Let $\chi \in C_0^{\infty}(\R^n)$ be a cut-off function such that $\chi(x) = 1$ on a neighborhood of $K$. It is easy to see that the {\it modified resolvent}
$$R_{\chi}(\lambda) = \chi R(\lambda)\chi$$ 
has a meromorphic continuation in $\C$ and the poles of $R_{\chi}(\lambda)$ are independent of the choice of $\chi$. These poles coincide with their multiplicities with those of the resonances. On the other hand, the scattering amplitude $a(\lambda, \theta, \omega)$ also admits a meromorphic continuation in $\C$ and the poles of this continuation and their multiplicities are the same as those of the resonances (see \cite{LP}). From the general results on propagation of singularities given in \cite{MS}, it follows that if $K$ is non-trapping, there exist $\epsilon > 0$ and $d > 0$ so that $R_{\chi}(\lambda)$ has no poles in the domain
$$U_{\epsilon, d} = \{ \lambda \in \C:\:0 \leq \Im \lambda \leq  \epsilon \log (1 + |\lambda|) -d\}.$$
Moreover, for non-trapping obstacles we have the estimate (see \cite{Va})
$$\|R_{\chi}(\lambda)\|_{L^2(\Omega) \longrightarrow L^2(\Omega)} \leq \frac{C}{| \lambda|} e^{C |\Im \lambda|},\: \forall \lambda \in U_{\epsilon, d}.$$

We conjecture that the existence of singularities $t_m \longrightarrow -\infty$ of the scattering kernel $s(t, \theta_m, \omega_m)$ implies that for every $\epsilon > 0$ and $d > 0$ we have resonances in  $U_{\epsilon, d}$. 

Here we prove a weaker result assuming an estimate of the scattering amplitude.

\begin{thm} Suppose that there exist $m \in \N, \: \alpha \geq 0, \: \epsilon > 0,\: d > 0$ and $C > 0$ so that $a(\lambda, \theta, \omega)$ is analytic in $U_{\epsilon, d}$ and 
\begin{equation} \label{eq:4.1}
|a(\lambda, \theta, \omega)| \leq C (1 + |\lambda|)^m e^{\alpha |\Im \lambda|},\:\:\forall \ot \in \ssn,\:\: \forall \lambda \in U_{\epsilon, d}.
\end{equation}
Then if $K$ satisfies $({\mathcal G})$, there are no trapping weakly non-degenerate rays in $\Omega.$
\end{thm}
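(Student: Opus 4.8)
The plan is to argue by contradiction, combining the estimate (\ref{eq:4.1}) on the scattering amplitude with Theorem 3.8, which produces, from a trapping weakly non-degenerate ray, a sequence of ordinary reflecting non-degenerate $(\omega_m,\theta_m)$-rays $\gamma_m$ with sojourn times $T_m \to \infty$ and $-T_m \in {\rm sing}\:{\rm supp}\: s(t,\omega_m,\theta_m)$ for all $m$. So suppose $K$ satisfies $({\mathcal G})$ and has a trapping weakly non-degenerate ray; fix the sequence $\gamma_m$, $(\omega_m,\theta_m)$, $T_m$ as above.

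The first step is to translate the analyticity and polynomial bound on $a(\lambda,\theta,\omega)$ in the logarithmic strip $U_{\epsilon,d}$ into a statement about the smoothness of the Fourier transform $s(t,\theta,\omega)$ for $t$ very negative. Recall $s(t,\theta,\omega) = {\mathcal F}_{\lambda\to t}\bigl((\ii\lambda/2\pi)^{(n-1)/2}\overline{a(\lambda,\theta,\omega)}\bigr)$. The key analytic fact is a Paley--Wiener / Phragmén--Lindelöf type argument: if a function is analytic in a region of the form $\{0 \le \Im\lambda \le \epsilon\log(1+|\lambda|) - d\}$ (together with the lower half-plane, where the scattering amplitude is always holomorphic) and satisfies a polynomial bound there times $e^{\alpha|\Im\lambda|}$, then one may shift the contour of the inverse Fourier transform upward. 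Concretely, for $t < -\alpha$ one writes $s(t,\theta,\omega)$ as an integral over the line $\Im\lambda = \epsilon\log(1+|\lambda|) - d$ (justifying the contour shift using the polynomial growth to control the arcs at infinity), picking up a gain $e^{-(|t|-\alpha)(\epsilon\log(1+|\lambda|)-d)} = e^{(|t|-\alpha)d}(1+|\lambda|)^{-\epsilon(|t|-\alpha)}$ in the integrand. Hence for $|t|$ large enough — say $|t| > \alpha + (N + m + n)/\epsilon$ — the integrand decays like $(1+|\lambda|)^{-N}$, so $s(\cdot,\theta,\omega)$ is $C^{N}$ near such $t$; letting $N\to\infty$, there is a threshold $T_0 = T_0(\epsilon,d,\alpha,m,n)$, \emph{independent of $(\omega,\theta)$} by the uniformity of (\ref{eq:4.1}), such that $s(t,\theta,\omega)$ is $C^\infty$ in $t$ for all $t < -T_0$ and all $(\omega,\theta)\in\ssn$. (This is exactly the mechanism that gives $C^\infty$ smoothness away from a compact $t$-interval in the non-trapping case, cf. the discussion after (\ref{eq:2.2}).)

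The second step is the contradiction: since $T_m \to \infty$, choose $m$ with $T_m > T_0$. Then $-T_m < -T_0$, so by the first step $s(t,\omega_m,\theta_m)$ is $C^\infty$ near $-T_m$; but Theorem 3.8 asserts $-T_m \in {\rm sing}\:{\rm supp}\: s(t,\omega_m,\theta_m)$, a contradiction. Therefore no trapping weakly non-degenerate ray can exist.

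I expect the main obstacle to be the rigorous justification of the contour shift in the first step — i.e., verifying that the holomorphy region $U_{\epsilon,d}$ together with the lower half-plane is the right domain, that the growth bound (\ref{eq:4.1}) controls the connecting arcs so no boundary terms appear, and that the resulting line integral along the logarithmic curve genuinely converges and is differentiable in $t$ the claimed number of times. The factor $(\ii\lambda)^{(n-1)/2}$ and the conjugation in the definition of $s$ need a small amount of care (the conjugate of an analytic function of $\lambda$ is analytic in $\bar\lambda$, so one works with $\lambda$ in the strip $\{-\epsilon\log(1+|\lambda|)+d \le \Im\lambda \le 0\}$ after conjugation, or equivalently notes the symmetry $\overline{a(\lambda,\theta,\omega)} = a(-\bar\lambda,\theta,\omega)$ coming from the reality of the wave equation), but this only shifts which half-plane one pushes the contour into and does not affect the argument. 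Everything else is bookkeeping once this Paley--Wiener step is in place.
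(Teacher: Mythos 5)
Your overall architecture is the same as the paper's: argue by contradiction, invoke the theorem producing non-degenerate ordinary reflecting rays with $T_m\to\infty$ and singularities at $-T_m$ (Theorem 3.7 in the paper's numbering, not 3.8), and convert the analyticity and growth bound (\ref{eq:4.1}) into eventual regularity of $s(t,\theta,\omega)$ for $t\ll 0$. The paper does the last step by quoting Lemma 4.2 from \cite{PS2}; you re-derive it by contour shifting, which is the right mechanism.

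There is, however, a genuine quantifier gap in your first step, and your contradiction in the second step leans on it. Your contour computation gives: on the shifted curve the integrand is $O\bigl(e^{|t|d}(1+|\lambda|)^{\,m+\alpha\epsilon-\epsilon|t|}\bigr)$, so $s(\cdot,\theta,\omega)\in C^{N}$ for $t<-T_N$ with $T_N\approx(m+\alpha\epsilon+N+1)/\epsilon$. Since $T_N\to\infty$ with $N$, ``letting $N\to\infty$'' does \emph{not} yield a single threshold $T_0$ beyond which $s$ is $C^\infty$; that is precisely why the paper's Lemma 4.2 is stated as ``for each $q$ there exists $t_q$ (depending on $q$) with $u\in C^q$ for $t\le t_q$.'' With only this (correct) conclusion, contradicting the bare statement $-T_m\in{\rm sing}\:{\rm supp}\;s$ fails: for each $m$ the kernel could a priori be $C^{N_m}$ but not $C^\infty$ near $-T_m$, with $N_m\to\infty$. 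The repair is to use the \emph{quantitative} form of the singularity rather than mere membership in the singular support: because the $\gamma_m$ are ordinary reflecting, non-degenerate, and arranged to satisfy (\ref{eq:2.3})--(\ref{eq:2.4}), Theorem 2.1 applies and shows that near $-T_m$ the leading term of $s$ is a \emph{nonzero} multiple of $\delta^{(n-1)/2}(t+T_m)$; hence $s$ fails to be $C^0$ (indeed fails to be a locally bounded function) near every $-T_m$. Now a single fixed $q$ (say $q=0$) in the eventual-regularity statement suffices, and choosing $m$ with $-T_m<t_0$ gives the contradiction. This is exactly how the paper combines Theorem 2.1, Theorem 3.7 and Lemma 4.2. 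Separately, your ``main obstacle'' is real but worse than ``bookkeeping'': the connecting vertical segments at $|\Re\lambda|=R$ have length $\sim\epsilon\log R$ while the integrand is only polynomially bounded there, so the boundary terms do not tend to zero pointwise; the shift must be performed in the distributional sense (pairing with test functions whose Fourier transforms decay rapidly), which is what the cited lemma packages.
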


The proof of this result follows directly from the statement in Theorem 2.3 in \cite{PS2}. In fact, if 
there exists a weakly non-degenerate trapping ray, we can apply Theorem 3.7, and for the sequence of sojourn times 
$\{-T_m\},\: T_m \to \infty,$ related to a weakly non-degenerate ray $\delta$, an application of Theorem 2.1 yields a sequence of delta type isolated singularities of the scattering kernel. The existence of these singularities combined with the estimate (\ref{eq:4.1}) leads to a contradiction since we may apply the following

\begin{lem} $($\cite{PS2}$)$ Let $u \in {\mathcal S}'(\R)$ be a distribution. Assume that the Fourier transform $\hat{u}(\xi), \xi \in \R,$ admits an analytic continuation in 
$$W_{\epsilon, d} = \{\xi \in \C:\:d - \epsilon \log (1 + |\xi|) \leq \Im \xi \leq 0\},\: \epsilon > 0, \: d > 0$$
such that for all $\xi \in W_{\epsilon, d}$ we have
$$|\hat{u}(\xi)| \leq C(1 + |\xi|)^N e^ {\gamma |\Im \xi|},\: \gamma \geq 0.$$
Then for each $q \in \N$ there exists $t_q < \tau$ and $v_q \in C^q(\R)$ such that $u = v_q$ for $t \leq t_q.$
\end{lem}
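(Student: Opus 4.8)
The plan is to prove the lemma by Fourier inversion combined with a deformation of the inversion contour into the lower half-plane, where the hypothesis supplies analyticity and polynomial-type growth control. Normalize the transform so that $u=\mathcal F^{-1}\hat u$ with $\mathcal F^{-1}\varphi(t)=(2\pi)^{-1}\int_{\R}e^{\ii t\xi}\varphi(\xi)\,d\xi$; since on the real axis the hypothesis gives $|\hat u(\xi)|\le C(1+|\xi|)^N$, the function $\hat u$ has polynomial growth and $u\in\mathcal S'(\R)$ is genuinely recovered from it. For $\xi=a+\ii b$ with $b=\Im\xi\le 0$ one has $|e^{\ii t\xi}|=e^{-tb}$, which for $t<0$ is $\le 1$ and decays faster the deeper we descend; this is exactly why analyticity in $W_{\epsilon,d}$ governs the behaviour of $u$ as $t\to-\infty$. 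I would therefore fix the contour $\Gamma=\{a+\ii\rho(a):a\in\R\}$ following the lower boundary of $W_{\epsilon,d}$, i.e. $\rho(a)=\min\{0,\,d-\epsilon\log(1+|a|)\}$, so that $\rho\equiv 0$ near the origin and $\rho(a)=d-\epsilon\log(1+|a|)$ for $|a|$ large, and then estimate $\hat u(\xi)e^{\ii t\xi}$ along $\Gamma$. On the descending part $|\Im\xi|=\epsilon\log(1+|a|)-d$, so the hypothesis gives $|\hat u(\xi)|\lesssim(1+|a|)^{N+\gamma\epsilon}$ while $|e^{\ii t\xi}|=e^{-td}(1+|a|)^{t\epsilon}$, whence
\[ |\hat u(\xi)\,e^{\ii t\xi}|\ \lesssim\ (1+|a|)^{N+\epsilon(\gamma+t)}\qquad(\xi\in\Gamma). \]
Multiplying by the factor $\xi^k$ corresponding to $\partial_t^k$ inserts $(1+|a|)^k$, so the integral representing $\partial_t^k$ converges absolutely once $N+k+\epsilon(\gamma+t)<-1$. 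Setting $t_q=-\gamma-(N+q+1)/\epsilon$, the contour integral and its first $q$ time-derivatives thus converge absolutely and locally uniformly for $t<t_q$.

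The one genuine difficulty is justifying the contour shift at the distributional level, since $u$ is only tempered and $\hat u$ only polynomially bounded. I would handle this by Gaussian regularization: set $u_\delta=\mathcal F^{-1}(e^{-\delta\xi^2}\hat u)$ for $\delta\in(0,1]$. Each $u_\delta$ is given by an absolutely convergent integral over $\R$, and because $e^{-\delta\xi^2}$ decays like a Gaussian in $\Re\xi$ it dominates the at most $\log^2$-growth of $|\Im\xi|^2$ along $\Gamma$; applying Cauchy's theorem on the truncated region $|\Re\xi|\le R$ between $\R$ and $\Gamma$ and letting $R\to\infty$, the vertical contributions vanish and we obtain $u_\delta(t)=(2\pi)^{-1}\int_\Gamma e^{-\delta\xi^2}\hat u(\xi)e^{\ii t\xi}\,d\xi$. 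A short estimate shows that $|e^{-\delta\xi^2}|=e^{-\delta(a^2-b^2)}$ is bounded on $\Gamma$ by a constant independent of $\delta\in(0,1]$ (it can exceed $1$ only on the bounded set where $b^2>a^2$), so for $t<t_0=-\gamma-(N+1)/\epsilon$ the integrands are dominated, uniformly in $\delta$, by the integrable function $(1+|a|)^{N+\epsilon(\gamma+t)}$. Dominated convergence then gives $u_\delta(t)\to w(t):=(2\pi)^{-1}\int_\Gamma \hat u(\xi)e^{\ii t\xi}\,d\xi$ locally uniformly on $(-\infty,t_0)$.

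Finally I would identify $w$ with $u$. Since $e^{-\delta\xi^2}\to1$ as a multiplier, $u_\delta\to u$ in $\mathcal S'(\R)$; testing against any $\phi\in C_0^\infty((-\infty,t_0))$ and using the locally uniform convergence $u_\delta\to w$ yields $\langle u,\phi\rangle=\langle w,\phi\rangle$, so $u=w$ on $(-\infty,t_0)$. Differentiation under the integral sign, legitimate for $t<t_q$ by the estimate above, shows $w\in C^q((-\infty,t_q))$; taking $v_q$ to be any $C^q(\R)$ extension of $w|_{(-\infty,t_q)}$ gives $u=v_q$ for $t\le t_q$. Because smoothness on a smaller left half-line is a weaker conclusion, $t_q$ may be decreased at will, in particular below the prescribed $\tau$. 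The crux of the argument is the balancing in the displayed exponent: on the logarithmic contour the polynomial loss $(1+|a|)^N$ and the growth $e^{\gamma|\Im\xi|}$ turn into the polynomial factor $(1+|a|)^{N+\gamma\epsilon}$, which is overcome by the gain $(1+|a|)^{t\epsilon}$ coming from $e^{\ii t\xi}$ exactly once $t$ is pushed below $-\gamma-(N+q+1)/\epsilon$; everything else is the routine, but necessary, distributional bookkeeping carried out through the Gaussian cutoff.
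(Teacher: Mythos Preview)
The paper does not actually prove this lemma; it is merely quoted from \cite{PS2} and then applied. Your argument is correct and is precisely the standard Paley--Wiener type contour deformation one expects here: push the inversion contour to the lower boundary of $W_{\epsilon,d}$, observe that on that logarithmic curve both the hypothesis bound $e^{\gamma|\Im\xi|}$ and the oscillatory factor $e^{\ii t\xi}$ become powers of $(1+|a|)$, and read off integrability (hence $C^q$ regularity) once $t$ is sufficiently negative. The Gaussian regularization you use to justify the shift at the distributional level is a clean way to handle the tempered (rather than $L^1$) nature of $\hat u$, and your check that $|e^{-\delta\xi^2}|$ stays uniformly bounded on $\Gamma$ and that the vertical contributions vanish is exactly what is needed. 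This is essentially the argument in \cite{PS2}, so there is nothing to contrast.
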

Here the Fourier transform $\hat{u}(\xi) = \int e^{-\ii t \xi} u(t)
dt$ for $u \in C_0^{\infty}(\R)$ and for $\lambda \in \R$ we have

$$\hat{s}(\lambda, \theta, \omega) = \Bigl(\frac{\ii \lambda}{2 \pi}\Bigr)^{(n-1)/2}\overline{a(\lambda, \theta,
  \omega)} = \Bigl(\frac{\ii \lambda}{2 \pi}\Bigr)^{(n-1)/2}a(-\lambda, \theta, \omega).$$
 Thus $\hat{s}(\lambda,
\theta, \omega)$ admits an analytic continuation in $W_{\epsilon, d}$ and the estimate (\ref{eq:4.1}) implies an estimate for $\hat{s}(\lambda, \theta, \omega)$ in $W_{\epsilon, d}.$\\

 It is easy to see that the analyticity of $R_{\chi}(\lambda)$ in $U_{\epsilon, d}$ and the estimate
\begin{equation} \label{eq:4.2}
\|R_{\chi}(\lambda)\|_{L^2(\Omega) \longrightarrow L^2(\Omega)} \leq  C' (1 + |\lambda|)^{m'} e^{\alpha' |\Im \lambda|},\:\ \forall \lambda \in U_{\epsilon, d}
\end{equation}
with  $m' \in \N, \: \alpha' \geq 0$, imply $(\ref{eq:4.1})$ with suitable $m$ and $\alpha.$ This follows from the representation of the scattering amplitude involving the cut-off resolvent $R_{\psi}(\lambda)$ (see \cite{PS2}, \cite{PZ}) with $\psi \in C_0^{\infty} (\R^n)$ having support in $\{x \in \R^n:\:0 < a' \leq |x| \leq b'\}.$ Moreover, we can take $a' < b'$ arbitrary large. More precisely, let $\varphi_a \in C_0^{\infty}(\R^n)$ be a cut-off function such that
$\varphi_a(x) = 1$ for $|x| \leq \rho.$ Set 
$$F_a(\lambda, \omega) = [\Delta \varphi_a + 2 \ii \lambda \langle \nabla \varphi_a, \omega \rangle ] e^{\ii \lambda \langle x, \omega \rangle}.$$ 
Let $\varphi_b(x) \in C_0^{\infty}(\R^n)$ be such that $\varphi_b(x) = 1$ on a neighborhood of $K$ and $\varphi_a(x) = 1$ on supp $\varphi_b.$ The scattering amplitude $a(\lambda, \theta, \omega)$ has the representation
$$ a(\lambda, \theta, \omega) = c_n \lambda^{(n-3)/2} \int_{\Omega} e^{-\ii \lambda \la x, \theta \ra} \Bigl[ (\Delta \varphi_b)R(\lambda) F_a(\lambda, \omega) $$
$$+ 2 \la \nabla_x \varphi_b, \nabla_x (R(\lambda) F_a(\lambda, \omega))\ra \Bigr] dx$$
with a constant $c_n$ depending on $n$ and this representation is independent of the choice of $\varphi_a$ and
$\varphi_b.$ In particular, if the estimate $(\ref{eq:4.2})$ holds, then the obstacle $K$ has no trapping weakly non-degenerate rays.\\

Consider the cut-off resolvent $R_{\psi}(\lambda)$ with supp $\psi
\subset \{x \in \R^n: \:0 < a' < |x| < b'\}.$ For $\lambda \in \R$ and sufficiently large $a'$ and $b'$ N. Burq \cite{B} (see also \cite{CV}) established the estimate 
\begin{equation} \label{eq:4.3}
\|R_{\psi}(\lambda)\|_{L^2(\Omega) \longrightarrow L^2(\Omega)} \leq \frac{C_2}{1 + |\lambda|},\: \lambda \in \R
\end{equation}
without any geometrical restriction of $K$. On the other hand, if we have resonances converging sufficiently fast to the real axis, the norm $\|R_{\chi}(\lambda)\|_{L^2(\Omega) \to L^2(\Omega)}$ with $\chi = 1$ on $K$ increases like ${\mathcal O}(e^{\C|\lambda|})$ for $\lambda \in \R,\: |\lambda| \to \infty.$ Thus the existence of trapping rays influences the estimates of $R_{\chi}(\lambda)$ with $\chi(x)$ equal to 1 on a neighborhood of the obstacle and the behaviors of the scattering amplitude $a(\lambda, \theta, \omega)$ and the cut-off resolvent $R_{\chi}(\lambda)$ for $\lambda \in \R$ are rather different if we have trapping rays.\\

 It is interesting to examine the link between the estimates for $a(\lambda, \theta, \omega)$ and the cut-off resolvent $R_{\chi}(\lambda)$ for $\lambda \in U_{\epsilon, d}.$ In this direction we have the following
\begin{thm} Under the assumptions of Theorem $4.1$ for $a(\lambda, \theta, \omega)$ the cut-off resolvent $R_{\chi}(\lambda)$ with arbitrary $\chi \in C_0^{\infty}(\R^n)$ satisfies the estimate $(\ref{eq:4.2})$ in $U_{\epsilon, d}$ with suitable $C' > 0,\:m' \in \N$ and $\alpha' \geq 0.$
\end{thm}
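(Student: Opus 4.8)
The plan is to split the assertion into a qualitative part --- holomorphy of $R_\chi(\lambda)$ in $U_{\epsilon,d}$ --- and the quantitative bound $(\ref{eq:4.2})$. For the qualitative part one uses that $(\ref{eq:4.1})$ and the analyticity of $a(\lambda,\theta,\omega)$ in $U_{\epsilon,d}$ are assumed for \emph{all} $\ot\in\ssn$. Hence the scattering operator $S(\lambda)$ on $L^2(\sn)$, whose Schwartz kernel is a fixed power of $\lambda$ times $a(\lambda,\theta,\omega)$, is holomorphic in $U_{\epsilon,d}$, and since $\sn$ has finite measure the pointwise bound $(\ref{eq:4.1})$ turns into a Hilbert--Schmidt bound
$$\|S(\lambda)-I\|_{L^2(\sn)\to L^2(\sn)}\le C_1(1+|\lambda|)^{m_1}e^{\alpha|\Im\lambda|},\qquad\lambda\in U_{\epsilon,d}.$$
As the poles of $\lambda\mapsto a(\lambda,\theta,\omega)$ together with their multiplicities coincide with the resonances, i.e. with the poles of $R_\chi(\lambda)$, the analyticity of $a$ already forces $R_\chi(\lambda)$ to be holomorphic in $U_{\epsilon,d}$, so only $(\ref{eq:4.2})$ remains.

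For $(\ref{eq:4.2})$ I would use a representation of the localized perturbed resolvent in terms of the free resolvent and the scattering data, in the spirit of the representation of $a(\lambda,\theta,\omega)$ recalled above; see \cite{PS2}, \cite{PZ}. Namely, $\chi R(\lambda)\chi$ can be written as a sum of $\chi R_0(\lambda)\chi$ and finitely many terms $P_1(\lambda)\,T(\lambda)\,P_2(\lambda)$, where $T(\lambda)$ is $S(\lambda)-I$ or $S(\lambda)^{-1}-I$ and $P_1(\lambda),P_2(\lambda)$ are free Poisson-type and trace operators built out of $R_0(\lambda)$, compactly supported cut-offs and the boundary $\partial K$. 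The free resolvent satisfies $\|\psi R_0(\lambda)\psi\|_{L^2\to L^2}\le C(1+|\lambda|)^{-1}e^{C|\Im\lambda|}$, together with the analogous estimates for its derivatives and for $P_1(\lambda),P_2(\lambda)$, in every logarithmic neighbourhood of $\R$, hence in $U_{\epsilon,d}$; these are classical and carry no geometric information about $K$. The only nontrivial point is the factor $S(\lambda)^{-1}$: using the symmetry $S(\lambda)^{-1}=S(\bar\lambda)^*$, bounding it in $U_{\epsilon,d}$ reduces to bounding $S(\mu)-I$ in the reflected neighbourhood $W_{\epsilon,d}=\{\mu\in\C:\ d-\epsilon\log(1+|\mu|)\le\Im\mu\le 0\}$. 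In $W_{\epsilon,d}$ there are no resonances, and for $\Im\mu<0$ the trivial bound $\|R(\mu)\|_{L^2\to L^2}\le 1/\mathrm{dist}(\mu^2,[0,\infty))$ inserted into the displayed representation of $a(\lambda,\theta,\omega)$ gives $|a(\mu,\theta,\omega)|\le C(1+|\mu|)^{N}e^{C|\Im\mu|}$; close to the real axis one fills the remaining gap with Burq's unconditional estimate $(\ref{eq:4.3})$ and the semiclassical maximum principle (a Phragm\'en--Lindel\"of argument in a thin strip, using that $\|S(\mu)\|=1$ for $\mu\in\R$). This yields $\|S(\mu)-I\|\le C_2(1+|\mu|)^{m_2}e^{\alpha_2|\Im\mu|}$ in $W_{\epsilon,d}$, and hence a bound of the same type for $S(\lambda)^{-1}-I$ in $U_{\epsilon,d}$.

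Substituting all these estimates into the representation of $\chi R(\lambda)\chi$ produces $(\ref{eq:4.2})$, with $m'$ and $\alpha'$ determined by $m,m_1,m_2$, by $\alpha,\alpha_2$ and by the fixed orders of $P_1(\lambda),P_2(\lambda)$; shrinking $\epsilon$ and enlarging $d$ at the outset absorbs the small losses coming from the logarithmic neighbourhoods. I expect the main obstacle to be the representation formula for $\chi R(\lambda)\chi$ in terms of $R_0(\lambda)$, $S(\lambda)$, $S(\lambda)^{-1}$ and the free Poisson operators, together with the bound for $S(\lambda)^{-1}$ --- equivalently, for the solution operator of the exterior Dirichlet problem --- throughout the logarithmic neighbourhood, since $(\ref{eq:4.1})$ controls $a$, hence $S(\lambda)$, only in $U_{\epsilon,d}$; this is precisely where Burq's real-axis estimate $(\ref{eq:4.3})$ and the maximum principle are indispensable. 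The remaining ingredients --- the Hilbert--Schmidt bound for $S(\lambda)-I$, the classical estimates for $R_0(\lambda)$ and the Poisson operators, and the bookkeeping of the exponents --- are routine.
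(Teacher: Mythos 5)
Your plan hinges entirely on the representation formula postulated in your second paragraph, and that formula cannot exist. If $\chi R(\lambda)\chi$, with $\chi$ equal to $1$ near $K$, could be written as $\chi R_0(\lambda)\chi$ plus finitely many terms $P_1(\lambda)T(\lambda)P_2(\lambda)$, where $T(\lambda)$ is $S(\lambda)-I$ or $S(\lambda)^{-1}-I$ and $P_1(\lambda),P_2(\lambda)$ are built only from $R_0(\lambda)$, compactly supported cut-offs and trace/Poisson operators on $\partial K$, then on the real axis, where $S(\lambda)$ is unitary and all the free factors are polynomially bounded, you would obtain $\|R_{\chi}(\lambda)\|_{L^2 \to L^2}\leq C(1+|\lambda|)^{N}$ for all $\lambda\in\R$ and for \emph{every} obstacle. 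This is false: as recalled in the paper right after (\ref{eq:4.3}), when there are resonances converging sufficiently fast to the real axis the norm $\|R_{\chi}(\lambda)\|$ with $\chi=1$ near $K$ grows like ${\mathcal O}(e^{C|\lambda|})$ along $\R$. Algebraic identities of the type you invoke are available only when the cut-offs are supported \emph{away} from (a neighbourhood of) the obstacle --- that is exactly the setting of Burq's estimate (\ref{eq:4.3}) and of the representation of $a(\lambda,\theta,\omega)$ through $R_{\psi}(\lambda)$ with ${\rm supp}\,\psi\subset\{a'<|x|<b'\}$ --- and the whole content of the theorem is the passage from far-field data, i.e. $S(\lambda)$, to a cut-off equal to $1$ near $K$. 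So the central step of your argument is a genuine gap, not a technical point; the remaining parts (the Hilbert--Schmidt bound for $S(\lambda)-I$, the symmetry $S^{-1}(\lambda)=S^{*}(\bar\lambda)$, the unconditional polynomial bound for $S(\mu)$ in the reflected region $W_{\epsilon,d}$) are fine but cannot carry the conclusion by themselves.

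The correct substitute for your missing identity is analytic, not algebraic, and this is the route the paper takes: one introduces the Lax--Phillips semigroup $Z^{b}(t)=P_{+}^{b}U(t)P_{-}^{b}$ with generator $B$, chooses $b$ so that $P_{\pm}^{b}\chi=\chi P_{\pm}^{b}=\chi$, and uses the modified scattering operator $S_1(\lambda)=e^{-\ii\beta\lambda}S(\lambda)$, which is an operator-valued inner function in $\Im\lambda\leq 0$. Theorem 3.2 of \cite{LP} gives $\|(\ii\lambda-B)^{-1}\|_{H\to H}\leq \frac{3}{2|\Im\lambda|}\,\|S_1^{-1}(\bar\lambda)\|$, which together with $S^{-1}(\lambda)=S^{*}(\bar\lambda)$, (\ref{eq:4.5}) and the hypothesis (\ref{eq:4.1}) bounds $\chi(\lambda-B)^{-1}\chi=-\ii\,\chi(-\ii\lambda-G)^{-1}\chi$, hence $R_{\chi}$, but only \emph{off} the real axis and with a loss $1/|\Im\lambda|$. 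This yields the estimate on the logarithmic boundary curve $\Im\lambda=\epsilon\log(1+|\lambda|)-d$ and on a line $\Im\lambda=-c_1<0$, and one then penetrates $U_{\epsilon,d}$ up to the real axis by a Phragm\'en--Lindel\"of argument using the a priori bound $\|R_{\chi}(\lambda)\|\leq Ce^{C|\lambda|^{n}}$ of \cite{TZ}. Your sketch contains none of these ingredients and, in particular, never addresses how to control $R_{\chi}$ at real $\lambda$ inside $U_{\epsilon,d}$, where neither (\ref{eq:4.1}) nor unitarity of $S(\lambda)$ gives any direct information on the resolvent near the obstacle; your Phragm\'en--Lindel\"of step is applied only to $S(\mu)$ below the axis, which is the easy (and in fact unconditional) part.
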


\begin{proof} The poles of $a(\lambda, \theta, \omega)$ in $\{z \in \C: \: \Im \lambda > 0\}$  coincide with the poles of the scattering operator 
$$S(\lambda) = I + K(\lambda): L^2(\Ss^{n-1}) \longrightarrow L^2(\Ss^{n-1}),$$
where $K(\lambda)$ has kernel $a(\lambda, \theta, \omega).$
Thus the estimate (\ref{eq:4.1}) of $a(\lambda, \theta, \omega)$ leads to  an estimate of the same type for the norm of the scattering operator $S(\lambda)$ for $\lambda \in U_{\epsilon, d}.$
 Notice that $S^{-1}(\lambda) = S^*(\bar{\lambda})$ for every $\lambda \in \C$ for which the operator $S(\lambda)$ is invertible. Moreover, the resonances $\lambda_j$ are symmetric with respect to the imaginary axe $\ii \R$.\\

 Consider the energy space  $H =H_D(\Omega) \oplus L^2(\Omega)$, the unitary group $U(t) = e^{\ii t G}$ in $H$ related to the Dirichlet problem for the wave equation in $\Omega$ and the semigroup $Z^b(t) = P_{+}^b U(t) P_{-}^b,\: t \geq 0,$ introduced by Lax and Phillips  (\cite{LP}). Here $P_{\pm}^b$ are the orthogonal projections on the orthogonal complements of the Lax-Phillips spaces $D_{\pm}^b,\: b > \rho$ (see \cite{LP} for the notation). Let $B^b$ be the generator of $Z^b(t)$.
The eigenvalues $z_j$ of $B^b$ are independent of $b$, the poles of the scattering operator $S(\lambda)$ are $\{-\ii z_j  \in \C,\: z_j \in \:{\rm spec}\: B^b\}$ and the multiplicities of $z_j$ and $-\ii z_j$ coincide. Given a fixed function $\chi \in C_0^{\infty}(\R^n)$, equal to 1 on $K$, we can choose $b > 0$ so that $P_{\pm}^b\chi = \chi P_{\pm}^b = \chi.$ We fix $b > 0$ with this property and will write below $B, \: P_{\pm}$ instead of $B^b, \: P_{\pm}^b$. Changing the outgoing representation of $H$, we may introduce another scattering operator
$S_1(\lambda)$ (see Chapter III in \cite{LP}) which is an operator-valued inner function in $\{\lambda \in \C: \: \Im \lambda \leq 0\}$ and
\begin{equation} \label{eq:4.4}
\|S_1(\lambda)\|_{L^2(\Ss^{n-1}) \to L^2(\Ss^{n-1})} \leq 1,\: \Im \lambda \leq 0.
\end{equation}
The estimates (\ref{eq:4.4}) is not true for the scattering operator $S(\lambda) = I + K(\lambda)$ related to the scattering amplitude. On the other hand, the link between the outgoing representations of $H$ introduced in Chapters III and V in \cite{LP} implies the equality
\begin{equation}  \label{eq:4.5}
S_1(\lambda) = e^{-\ii \beta \lambda}S(\lambda), \: \beta > 0.
\end{equation}
The following estimate established in Theorem 3.2 in \cite{LP} plays a crucial role
$$\|(\ii \lambda - B)^{-1}\|_{H \to H} \leq \frac{3}{2|\Im \lambda|}\|S_1^{-1}(\bar{\lambda})\|_{L^2(\sn) \to L^2(\sn)}, \: \forall \lambda \in U_{\epsilon, d} \setminus \R.$$

Since $S^{-1}(\lambda) = S^*(\bar{\lambda})$ for all $\lambda \in \C$
for which $S(\lambda)$ is invertible, the estimates (\ref{eq:4.1}) and (\ref{eq:4.5}) imply
$$\|(\ii \lambda - B)^{-1}\|_{H \to H} \leq \frac{3 e^{\beta|\Im \lambda|}}{2|\Im \lambda|}\|S(\lambda)\|_{L^2(\sn) \to L^2(\sn)}$$
$$\leq C_1 (1 + |\lambda|)^{m'} \frac{e^{\alpha' |\Im \lambda|}}{|\Im \lambda|},\: \forall \lambda \in U_{\epsilon, d} \setminus \R.$$
For $\Re \lambda > 0$ we have
$$\chi(\lambda - B)^{-1}\chi = \int_0^{\infty} e^{-\lambda t} \chi P_{+}U(t)P_{-}\chi dt = -\ii \chi (-\ii \lambda -  G)^{-1}\chi$$
and by an analytic continuation we obtain this equality for $ \lambda \in \ii U_{\epsilon, d}.$
By using the relation between $R_{\chi}(\lambda)$ and $\chi(\lambda - G)^{-1}\chi$, we deduce the estimate 
$$\|R_{\chi}(\lambda)\|_{L^2(\Omega) \to L^2(\Omega)} \leq C_2 (1 + |\lambda|)^{m'} \frac{e^{\alpha' |\Im \lambda|}}{|\Im \lambda|}$$
for $\Im \lambda = \epsilon \log(1 + |\lambda|) - d,\: |\Re \lambda| \geq c_0.$ On the other hand, $\|R_{\chi}(\lambda)\|_{L^2(\Omega) \to L^2(\Omega)}$ is bounded for $\Im \lambda = -c_1 < 0$ and have the estimate (see for example \cite{TZ})
$$\|R_{\chi}(\lambda)\|_{L^2(\Omega) \to L^2(\Omega) } \leq
Ce^{C|\lambda|^n}, \: \Im \lambda \leq \epsilon \log(1 + |\lambda) - d.$$
Then an application of the Pragmen-Lindel\"of theorem yields the result.

\end{proof}
It is an interesting open problem to show that the analyticity of  $a(\lambda, \theta, \omega)$ in $U_{\epsilon, d}$ implies the estimate $(\ref{eq:4.1})$ with suitable $m,\: \alpha$ and $C$ without any information for the {\it geometry} of the obstacle. The same problem arises for the strip $V_{\delta} =\{\lambda \in \C: \: 0 \leq \Im \lambda \leq \delta\}$
and we have the following\\

{\bf Conjecture.} {\it Assume that the scattering amplitude
  $a(\lambda, \theta, \omega)$ is analytic in $V_{\delta}.$ Then there
  exists a constants $C_1 > 0, \: C \geq 0$ such that}
$$|a(\lambda, \theta, \omega)| \leq C_1 e^{C|\lambda|^2},\: \forall \ot \in \ssn,\:\forall \lambda \in V_{\delta}.$$
For $n = 3$ this conjecture is true since we may obtain an exponential
estimate ${\mathcal O}(e^{C|\lambda|^2})$ for the cut-off resolvent
$R_{\chi}(\lambda),\: \lambda \in V_{\delta}$ (see for more details \cite{BP}).

{\footnotesize
}

% ------------------------------------------------------------------------
\end{document}